\newtheorem{remark}{Remark}
\newtheorem{assumption}{Assumption}
\newtheorem{definition}{Definition}
\newtheorem{proposition}{Proposition}
\DeclareMathOperator{\SINR}{SINR}
\DeclareMathOperator*{\LimSup}{limsup}
\date{}
\begin{document}

\title{Queue-aware Energy Efficient Control for Dense Wireless Networks}

  \author{%
  \IEEEauthorblockN{Maialen Larra\~naga, Mohamad Assaad and Koen De Turck}
  \IEEEauthorblockA{Laboratoire des Signaux et Syst\`emes (L2S,CNRS), CentraleSup\'elec\\
                    Gif-sur-Yvette, France}
}

\maketitle

\begin{abstract} We consider the problem of long term power allocation in  dense wireless networks. The framework considered in this paper is of interest for machine-type communications (MTC). In order to guarantee an optimal
	operation of the system while being as power efficient as possible, the
	allocation policy must take into account both the channel and queue states of the devices. This is a complex stochastic optimization problem, that can be cast
	as a Markov Decision Process (MDP) over a huge state space. In order to tackle
	this state space explosion, we perform a mean-field approximation on the MDP.
	Letting the number of devices grow to infinity the MDP converges to a
	deterministic control problem. By solving the Hamilton-Jacobi-Bellman Equation, we obtain a well-performing power allocation policy for the original stochastic problem, which turns out to be a threshold-based policy and can then be easily implemented in practice.  
\end{abstract}

\section{Introduction}\label{s1}
The steep increase of the number of mobile devices in use has brought a lot of attention to the design of large wireless networks. The proliferation of Internet of Things (IoT) applications will lead to a drastic increase of the density of devices in future wireless networks. Machine Type Communications (MTC) is the cellular technology for IoT. In 5G (and beyond) networks, it is foreseen that the density of MTC devices may surpass 1 Million of devices per $Km^2$ \cite{5G}. In such dense networks,  the network designer has to deal with severe interference issues in order to guarantee a certain level of quality of service (QoS). This can be handled by advanced physical layer solutions (e.g. Interference Alignment \cite{IA2008}, etc.),  which may however suffer in some cases from high complexity or high signaling overhead. Furthermore, opportunistic resource allocation, such as power control, can also help manage the impact of interference among users and hence improve their QoS. The focus of this paper is on resource allocation in such dense networks. The problem of power control in wireless networks has been widely studied in the past, e.g. in \cite{chiang} and the references therein. The problem of power control in large scale networks has also been investigated in the past using game theory and mean-field games, e.g. \cite{mf1,mf2,mf3}. The problem in these references is first formulated as a stochastic differential game and then the sufficient conditions for the existence and uniqueness of the mean-field equilibrium are provided. It is also shown that this equilibrium power can be obtained by solving  a coupled system of Fokker-Planck-Kolmogorov (FPK) equations (which take the form of forward equations) and Hamilton-Jacobi-Bellman (HJB) equations (which take the form of backward equations) to form a system of so-called forward-backward equations. In the aforementioned work on mean-field games, two issues are not addressed: i) solving numerically the resulting forward-backward equations has a high complexity, and ii) the focus of the proposed frameworks is on the wireless links,  i.e. channel state information (CSI), without taking into account the traffic patterns and/or the queues of the users. In fact,  since the CSI reveals the instantaneous transmission opportunities at the physical layer and the queue state information (QSI) reveals the urgency of the data flows, a good control policy must take into account both the CSI and the QSI, and the goal in the present paper is to find such control policy. Queue-aware control problems have been widely studied in the literature and several approaches have been used. For example in \cite{Apostolos,Tassiulas}, the allocation policies are based on the MaxWeight rule which allows to stabilize the queues of the users. However, the MaxWeight rule may suffer from high delay and therefore delay-ware control policies for wireless networks have been developed in \cite{Lau1,Lau2}, where it is established that Markov Decision Processes (MDP) constitute the systematic approach used in the development of the delay-aware policies. A survey on delay-aware control policies can be found in \cite{Lau3}.  MDP problems prove to be a difficult problem to solve. Many techniques have been proposed, for instance brute force value iteration or policy iteration \cite{puterman,Lau3} that find the optimal control policy by solving the Bellman equation. However these techniques have a huge complexity (due to the curse of dimensionality) because solving the Bellman equation involves solving a large system of non-linear equations whose size increases exponentially in the number of users. Effort has been done in order to deal with the curse of dimensionality \cite{Lau1} by utilizing the interference filtering property of the CSMA-like MAC protocol. A closed-form approximate solution and the associated error bound have been derived using perturbation analysis. However this assumption on the weak interference seems constraining and not adapted to dense wireless networks where the interference level cannot be small. In this work, to overcome the dimension problem we use the mean field approach. It consist of neglecting the behavior of individual user by only considering the one of the proportion of users in certain state. This allows us to move from a stochastic optimization problem to a continuous-time deterministic one. 
We formulate the bias optimal control problem based on this deterministic approach and we solve it by characterizing a solution of the Hamilton-Jacobi-Bellman equation.
One of the main challenges we face is that the equations are fully coupled, meaning the solution of one is dynamically depending on the solution of the other. In order to handle those challenges we adopt a three steps method to finally obtain the optimal power control. We first characterize the optimal equilibrium point of the dynamic system with respect to the control variable. Then we prove convexity of the cost function (in all possible equilibrium points). Finally, we propose a threshold type of policy that satisfies the HJB equations, and is hence bias-optimal.  The obtained policy, being a simple threshold type of policy, can easily be applied in the original stochastic system and provides nearly-optimal performance. 

Summarizing, these are the main differences between the present paper and the existing body of literature. While the existing work on mean-field games in wireless networks focuses on the CSI and formulate the power control problems using game theory (e.g. differential game) \cite{mf1,mf2,mf3}, we consider the impact of the QSI in addition to the CSI in this work. Furthermore, our problem is a multi-dimensional stochastic optimization problem that is formulated as an infinite horizon average cost MDP. Last but not least, while most of the existing work on mean field (e.g. \cite{mf1,mf2,mf3}) does not provide a simple solution of the forward-backward equation (resulting from the mean field game) which is known to be complex, we analyze in this paper  the  forward-backward equation resulting from our MDP problem and provide a full characterization of the mean field solution under a specific channel model. This is the main contribution in this paper. Moreover, it is worth mentioning that our obtained policy is a threshold based policy and hence it is easy to implement in practice.

\section{System Model}\label{s2}
In this section, we introduce the system model of our wireless network consisting of $N$ transmitters communicating with a Base Station (BS). The transmitters correspond for example to users or to Machine Type Communication (MTC) devices. We will use the terms \emph{transmitter} and \emph{user} interchangeably throughout the paper. We assume time to be slotted and users to be synchronized to these time slots. At the beginning of each time slot, users that have been allotted enough transmission power will be able to transmit their packets. The latter not only depends on the allocated power but also the channel quality of each user. 
We consider the channel state of transmitter $n$, i.e., $h_{n}(t)$, to take values in the set $\{c_1,\ldots,c_K\}$. We will assume that $h_{n}(t)=c_K$ is the best quality channel and $c_1$ the worst. The channel is further assumed to evolve as an i.i.d.~process from one time-slot to another, although our modeling framework holds for the more general case of Markovian channel dynamics as well. 

The users transmit on the same bandwidth and interfere with each other. For a given channel state $h_{n}(t)$ for user $n$, and transmit power $p_n(t)$ the 
$\SINR$ of user $n$ is given by 
\begin{equation*}
\SINR_{n}(\mathbf{h}(t),\mathbf{p}(t))=\frac{h_{n}(t)p_n(t)}{\sum_{k\neq n}\alpha_k h_{k}(t)p_k(t)+N_0},
\end{equation*}
where $N_0$ is Gaussian noise, $\alpha_k$ is a weight that comes for example from the processing gain
at the receiver (this is widely used in the literature, e.g. in \cite{chiang,mf1}  and in \cite{meshkati2006game} for a CDMA system and a Match Filter receiver), $\mathbf{h}(t)=(h_1(t),\ldots,h_N(t))$ and $\mathbf{p}(t)=(p_1(t),\ldots,p_N(t))$. We will assume that the transmit power of each transmitter in each time slot is bounded by $p_{\text{max}}$. Namely,  $0\leq p_n(t)\leq p_{\text{max}}, \text{ for all } n\in\{1,\ldots,N\}$. For ease of notation we define
$
\SINR_{n}(t):=\SINR_n(\mathbf{h}(t),\mathbf{p}(t)).
$
In order to receive correctly the information at the receiver, it is required that
\begin{equation}\label{eq:thetadefinition}
 \SINR_{n}(t)\geq \theta.
\end{equation}
where $\theta$ is a given threshold. For convenience, we also assume that each transmitter can transmit at most one packet per time slot if the SINR constraint in ~\eqref{eq:thetadefinition} is satisfied. The extension to the case of higher rates is straightforward.  Therefore, the achievable data rate of user $n$ is given by
\begin{equation}\label{eqn2}
R_n(\mathbf{h}(t),\mathbf{p}(t))=\mathbf{1}_{\{\SINR_{n}(t)\geq \theta\}},
\end{equation}

Let us now present the bursty data source and the queue dynamic for each user $n$.
Let  $A_n(t)$  be the (random) number of packet arrivals to the transmitter $n$ at the end of time slot $t$. Let $A_n(t)$ be i.i.d. over time slots. We assume that in each time slot there will be at most one packet arrival, i.e., $P(A_n(t)=1)=\rho$ and $P(A_n(t)=0)=1-\rho$ with $\rho>0$. 

Each transmitter has a data queue for the bursty traffic flow towards its associated receiver.
Let $ Q^\phi_n(t)$ be the queue length at transmitter $n$ at the beginning of time slot $t$ under a power allocation policy $\phi$. The queue dynamic is then given by
\begin{equation}\label{eqn3}
Q_n^\phi(t+1)=\max\lbrace Q_n^\phi(t)-R_n(\mathbf{h}(t),\mathbf{p}(t)),0\rbrace+A_n(t).
\end{equation}
For mathematical tractability, we assume that the queue length cannot exceed $Q_{\text{max}}$ and that packets that arrive during the buffer overflow are dropped. Namely, $Q^\phi_n(t) \in\{0,\ldots,Q_{\text{max}}\}$. 
\begin{remark}\label{Rem1}
The dynamics of all $N$ queues are coupled together due to the interference term in the expression of the $\SINR$. The departure of the queue at each transmitter depends on the power actions of all the other transmitters.
\end{remark}

The objective of the present work is to find an optimal power allocation policy $\phi$ taking into account the interferences between users in the system.

\section{Control Problem Formulation}\label{s3}

Let $X_n^\phi(t)=(h_n(t),Q_n^\phi(t))$, be the state of transmitter $n$, namely, the channel condition and the queue length. The transmit power is dynamically adapted to the global system to handle the interference mitigation. In this work we focus on the set of all stationary policies $\Phi$. 
Given a control policy $\phi\in\Phi$, the stochastic process $X_n^\phi(t)$ is a controlled Markov chain with the following transition probabilities
\begin{align*}
&\nu^\phi_n(X_n^\phi(t+1)=(c,q)|Q_n^\phi(t)=q',\mathbf{h}(t),\mathbf{p}(t))\\
&=\mathbb{P}(h_n(t+1)=c|h_n(t)=c')\\
&\quad\cdot\mathbb{P}(Q_n^\phi(t+1)=q|Q_n^\phi(t)=q',\mathbf{h}(t),\mathbf{p}(t)), \text{ for all }n.\nonumber
\end{align*}
Observe that, for the i.i.d. channel model, the probability  $\mathbb{P}(h_n(t+1)=c|h_n(t)=c')$ reduces to $\mathbb{P}(h_n(t+1)=c)$. 
According to the system model in Section II, we note that in $\mathbb{P}(Q_n^\phi(t+1)=q|Q_n^\phi(t)=q',\mathbf{h}(t),\mathbf{p}(t))$ $q$ can only take three values $q\in\{q'-1,q',q'+1\}$. We give explicit expression of all transition probabilities in Appendix~\ref{prob_expressions}.

The objective is to minimize the average power cost together with the queue length. In order to reduce the delay and queue overflow, users with a higher queue length should be prioritized over users with small number of packets to transmit.
The objective of the present work is then to minimize
\begin{align*}
\mathcal{L}^\phi
&=\limsup_{T\rightarrow\infty}\frac{1}{T}\sum_{t=0}^{T-1}\sum_{n=1}^N\mathbb{E}\left[p_n(t)+\lambda Q^\phi_n(t)\right]
\end{align*}
 where $\lambda \geq 0$ is a weight parameter that can be adjusted to in order to find a tradeoff between the power consumption the queue length minimization. 

This problem, due to the complex interrelations between users, is a very complex MDP problem. Well known simple heuristics to solve such MDPs (such as Whittle's index policy)  fail in this problem, due to the interferences between users. In the next section we therefore develop a mean-field approximation.

\subsection{Mean-Field Approach}
We consider each user in the network as an object evolving in a finite state space, the state of user $n$ at time $t$ is denoted as $X_n^\phi(t)$ and equals $(c,q)$, where $c\in\{c_1,\ldots,c_K\}$ and $q\in\{0,\ldots,Q_{\text{max}}\}$. We assume that the users are distinguishable only through their state. This means that the behavior of the system only depends on the proportion of users in every state. Let $M^N(t)$ be the empirical measure of the collection of users, it is a $S$-dimensional vector with the $n$-th component given by
$
\mathbf{M}^N(t)=(\mathbf{M}_1^N(t),\ldots,\mathbf{M}_K^N(t)),$ where
$\mathbf{M}^N_i(t)=(M^N_{i,0}(t),\ldots,M^N_{i,Q_{\text{max}}}(t)),$  and
$M^N_{i,j}(t)=\frac{1}{N}\sum_{n=1}^N\mathbf{1}_{\{X_n^\phi(t)=(c_i,j)\}},
$
for all $i\in\{1,\ldots,K\}$ and all $j\in\{1,\ldots,Q_{\text{max}}\}$. The value of  $M^N_{i,j}(t)$ is to be interpreted as the proportion of transmitter/users in channel state $c_i$ and queue length $j$. We then have that, the set of possible values for $\mathbf{M}^N$ is the set of probability measures on 
$$
\mathcal{S}=\left\{(c,q): c\in\{c_1,\ldots,c_K\}, q\in\{1,\ldots,Q_{\text{max}}\}\right\}.
$$

The mean field approach allows us to move from a stochastic optimal control problem to a deterministic one. The advantage is that we are no longer in an uncertain environment and we can now overcome the curse of dimensionality due the large number of users in the network. The limiting deterministic optimization problem is formulated as follows. Let us denote by $D\mathbf{M}^N(t)$ the expected drift of $\mathbf{M}^N(t)$, that is,
$$
D\mathbf{M}^\phi:=\mathbb{E}(\mathbf{M}^N(t+1)-\mathbf{M}^N(t)|\mathbf{M}^N(t)).
$$
We now aim at obtaining the explicit expression of the expected drift under the policy $\phi$. In order to do so, let us first define $s_i$ to be the state that corresponds to the $i^{\text{th}}$ entry in $\mathbf{M}^N$. Then we define
$\nu_{i,i}^\phi(\mathbf{m})$  to be the probability that a user in state $s_i\in\mathcal{S}$ at time slot $t$, transitions to state $s_j\in\mathcal{S}$ at time slot $t+1$ given that $\mathbf{M}^N(t)=\mathbf{m}$, that is, 
$$
\nu_{i,j}^\phi(\mathbf{m}):= g_i(\mathbf{m})\gamma_{i,j}^1+(1-g_i(\mathbf{m}))\gamma_{i,j}^0,
$$
where $g_i(\mathbf{m})$ is the fraction of users in state $s_i\in\mathcal{S}$ whose $\SINR_n(t)\geq \theta$, with $n$ a user in state $s_i$. The values of $\gamma_{i,j}^a$ for $a=0,1$ represent the transition probabilities from state $s_i$ to state $s_j$, when the $\SINR_n(t)\geq\theta$ for all users $n$ in state $s_i$ if $a=1$ and, when  $\SINR_n(t)<\theta$ for all users $n$ in state $s_i$ if $a=0$. These values depend on whether we assume an i.i.d. channel evolution model or a Markovian one. For both cases the expressions of $\gamma_{i,j}^a$ for $a=0,1$ can be found in Appendix B. 

We then have 
$
D\mathbf{M}^N(t)\bigg|_{\mathbf{M}^N(t)=\mathbf{m}}=\sum_{i}\sum_{j} \nu_{i,j}^\phi(\mathbf{m})\vec e_{ij}=U^\phi(\mathbf{m})\mathbf{m},
$
where $\vec e_{ij}=(0,\ldots,0,\overbrace{-1}^{i^\text{th}},0,\ldots,0,\overbrace{1}^{j\text{th}},0,\ldots)$, i.e., the $K\cdot (Q_{\text{max}}+1)$ dimensional vector with a $-1$ entry in the $i^{\text{th}}$ position and the entry at $j^{\text{th}}$ position equal to $1$ and $I$ is the identity matrix. We further have $\vec e_{ii}=\vec 0$. Also note that  
\begin{align*}
U_{i,j}^\phi(\mathbf{m})=\begin{cases}
-\sum_{r\neq i}\nu_{i,r}^\phi(\mathbf{m}) &\hbox{if } i=j,\\
\nu_{j,i}^\phi(\mathbf{m}) &\hbox{if } i\neq j.
\end{cases}
\end{align*}
We can now define
$
\mathbf{m}(t+1)-\mathbf{m}(t)=U_{i,j}^\phi(\mathbf{m}(t))\mathbf{m}(t).
$
The latter can be seen as a {\it fluid system} which is defined for any $\mathbf{m}(t)$ and not only for probability densities.

In the original stochastic problem we aim at minimizing the long run expected average power and the queue length. Note that in the fluid setting there are several power and queue trajectories that reach to the same equilibrium point and hence we aim at minimizing the {\it biased cost}. 
Assuming that $\phi$ is such that all users in same state $s\in\mathcal S$ are allocated same power and that $\mu_n=\mu_{n'}$ if user $n$ and $n'$ are both in the same state $s\in\mathcal{S}$, we can equivalently write
\begin{align}\label{eq:objective_meanfield}
&\mathcal{L}^{\phi,N}\approx\limsup_{T\to\infty}\frac{N}{T}\sum_{t=0}^{T-1}\sum_{i=1}^{K\cdot Q_{\text{max}+1}}\mathbb{E}(p_i(t)+\lambda m_i(t)\sigma(i)),
\end{align}
where $\sigma(\cdot)$ is a mapping between $i\in\{1,K\cdot(Q_{max}+1)\}$ and the queue-length. Namely, if $i=z*K+j$ then $\sigma(i)=j$ for all $z\in\{0,\ldots, Q_{\text{max}}+1\}$.

For the mean-field approach we note that $\alpha_n$ should scale as $1/N$ in order to have a finite interference in the network. This can be the case where in dense networks the number of users that interfere scales as $1/N$ or for example when an advanced receiver is used to cancel part of the interference.   This normalization is widely used in Mean field approach, e.g. \cite{mf1,mf2,mf3} and the references therein. Also, this normalization has been used and justified for instance in \cite{meshkati2006game} for a CDMA system and an Match Filter receiver. In this case, the $\SINR$ of user $n$ depends on the interference coming from other users, namely,
\begin{equation}\label{eq:interferences}
I_n(t)=\frac{1}{N}\sum_{i\neq n}p_i(t)h_i(t).
\end{equation}
\begin{proposition}
Let $I_n(t)$ be given by~\eqref{eq:interferences}, the interference perceived by transmitter $n$. We prove that 
$
I_n(t)\to I(t), \text{ as } N\to~\infty.
$
Consequently, a user in state $i$ achieves 
$
\SINR_i(t)=\frac{p_i(t)h_i}{\sum_{j=1}^{K\cdot(Q_{\text{max}}+1)}p_j(t)h_jm_j(t)+N_0}.
$
\end{proposition}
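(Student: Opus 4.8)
\emph{Sketch of the intended argument.} The plan is to condition on the empirical measure $\mathbf{M}^N(t)$ and exploit the structural assumption, made just above, that the policy $\phi$ allocates the same power to all users sharing a state $s\in\mathcal S$, together with the fact that a user's channel gain is a function of its state alone. Under this assumption the sum $\sum_{i\neq n}p_i(t)h_i(t)$ no longer sees individual users, only how many of them occupy each of the $K\cdot(Q_{\text{max}}+1)$ states, so it collapses to a weighted sum of the entries of $\mathbf{M}^N(t)$.

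First I would write, on the event $\{\mathbf{M}^N(t)=\mathbf{m}\}$ and with $s(n)$ denoting the state of user $n$,
\begin{equation*}
I_n(t)=\frac1N\sum_{i\neq n}p_i(t)h_i(t)
=\sum_{j=1}^{K\cdot(Q_{\text{max}}+1)}p_j(t)\,h_j\,m_j-\frac1N\,p_{s(n)}(t)\,h_{s(n)},
\end{equation*}
the last term subtracting the contribution of user $n$ itself from the group of users in state $s(n)$. Since $0\le p_j(t)\le p_{\text{max}}$ and the channel gains lie in the finite set $\{c_1,\dots,c_K\}$, this self-term is bounded by $p_{\text{max}}c_K/N$ and hence vanishes as $N\to\infty$.

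It then remains to pass to the limit in $\sum_j p_j(t)h_j m_j$. For this I would invoke the convergence of the density-dependent chain $\mathbf{M}^N(\cdot)$ to the deterministic fluid trajectory $\mathbf{m}(\cdot)$ defined by $\mathbf{m}(t+1)-\mathbf{m}(t)=U^\phi(\mathbf{m}(t))\mathbf{m}(t)$: for each fixed $t$, $\mathbf{M}^N(t)\to\mathbf{m}(t)$ in probability (the state space being finite, equivalently in $L^1$), which is the standard law-of-large-numbers / fluid-limit statement for density-dependent Markov chains once the one-step kernel is seen to depend on the configuration only through $\mathbf{M}^N$ and the induced drift is Lipschitz in $\mathbf{m}$. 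Combining the two displays gives $I_n(t)\to I(t):=\sum_{j}p_j(t)h_j m_j(t)$, and substituting this limit into the definition of $\SINR_n(\mathbf h(t),\mathbf p(t))$ yields the announced formula for $\SINR_i(t)$, the denominator being $I(t)+N_0$.

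The delicate point is the regularity required for that mean-field convergence. The per-state transition probabilities $\nu_{i,j}^\phi(\mathbf m)$ enter through the fraction $g_i(\mathbf m)$ of users in state $s_i$ whose SINR clears the threshold $\theta$, and $g_i$ involves an indicator $\mathbf{1}_{\{\SINR\ge\theta\}}$ evaluated at an interference level that is itself a function of $\mathbf m$; hence $\mathbf m\mapsto g_i(\mathbf m)$ is liable to jump on the set where $p_i(t)h_i=\theta\bigl(\sum_j p_j(t)h_j m_j+N_0\bigr)$. I would handle this either by restricting to configurations away from these (codimension-one, hence generically avoided) threshold sets, or by smoothing the activation rule so that the Lipschitz hypothesis of the fluid-limit theorem applies; everything else is the bookkeeping above, and this is where I expect the main work to lie.
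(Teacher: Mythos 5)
Your argument is correct and is, in substance, the very exchangeability argument the paper appeals to: the printed proof consists of the single sentence that the result follows from the ``interchangeability property assumed in the mean field,'' and your decomposition of $\tfrac1N\sum_{i\neq n}p_i(t)h_i(t)$ into a state-indexed sum weighted by the empirical measure plus an $O(1/N)$ self-term, followed by the law-of-large-numbers convergence $\mathbf{M}^N(t)\to\mathbf{m}(t)$, is exactly the content of that remark made explicit. The one place where you go beyond the paper is your observation that the one-step drift depends on $\mathbf{m}$ through the indicator $\mathbf{1}_{\{\SINR_n(t)\geq\theta\}}$ inside $g_i(\mathbf{m})$, so the Lipschitz hypothesis of the standard fluid-limit theorem can fail on the threshold set where $p_i(t)h_i=\theta\bigl(\sum_j p_j(t)h_j m_j+N_0\bigr)$; the paper is silent on this, and your proposed remedies (working away from that codimension-one set, or smoothing the activation rule --- note that the paper's later reformulation with fractional controls $s_i(t)\in[0,1]$ effectively does the latter) are sensible ways to close a gap the paper leaves open.
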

\begin{proof}
The result can be obtained by the Interchangeability property assumed in the mean field.
\end{proof}
The problem is therefore to find the power allocation policy $\phi$ such that we
\begin{align*}
\text{minimize} \int_{0}^{\infty} \bigg(\sum_{i=1}^{K\cdot(Q_{\text{max}}+1)}p_i(t)+\lambda m_i(t)\sigma(i)-E^*\bigg)\mathrm{d}t,
\end{align*}
where $E^*$ is the optimal equilibrium cost, subject to
$
\mathrm{d}m(t)=U^\phi(m(t))m(t)\mathrm{d}t, \text{ and }
p_i(t)\leq p_{\text{max}}.
$
That is we aim at characterizing a \emph{bias optimal} policy.

Next we reformulate the problem in order to be able to characterize an optimal policy $\phi$ for the problem introduced above. We note that if we were to minimize $\sum_i^{K(Q_{\text{max}}+1)}p_i(t)$ it suffices to solve
$
\SINR_i(t)=\frac{p_i(t)h_i}{N_0+\sum_{j=1}^{K(Q_{\text{max}}+1)}p_j(t)m_j(t)h_j}=\theta s_i(t), 
$
 with  $s_i(t)\in[0,1]$.
The latter has a unique solution $\vec p^*(\vec s(t))=(p_1^*(\vec s(t)),\ldots,p^*_{K(Q_{\text{max}}+1)}(\vec s(t)))$ given by
$$
p_i^*(t)=p_i^*(\vec s(t))=\frac{\theta N_0 s_i(t)}{h_i(1-\sum_{j=1}^{K(Q_{\text{max}}+1)}s_j(t)m_j(t)\theta)}, 
$$
 for all  $h_i>0,$
$p_i^*(t)=s_i(t)=0$ if $h_i=0$. For the latter solution $\vec p^*(t)$ to be a feasible solution we impose the following assumptions.
\begin{assumption}\label{ass1} We assume $\theta<1$. The latter implies
 $1-\sum_{i=1}^{K(Q_{\text{max}}+1)}s_i(t)m_i(t)\theta>0$ and $p_i^*(t)\geq 0$ for all $s_i(t)\in[0,1]$ and all $m_i(t)\in[0,1]$.
\end{assumption}
\begin{assumption}\label{ass2}
We assume $\theta\leq P_{\text{max}}h_i/(N_0+p_{\text{max}}h_i)$ for all $h_i>0$. The latter implies $p_i^*(t)\leq p_{\text{max}}$.
\end{assumption}
Therefore, we aim at finding the control vector $\vec s(t)$ such that 
\begin{align}\label{eq_obj}
\int_0^{\infty}&(\sum_{i=1}^{K(Q_{\text{max}}+1)}\frac{\theta N_0 s_i(t)}{h_i(1-\sum_{j=1}^{K(Q_{\text{max}}+1)}\theta s_j(t)m_j(t))}\nonumber\\
&+\lambda\sigma(i)m_i(t)-E^*) \mathrm{d}t,
\end{align}
is minimized, subject to $s_i(t)\in[0,1]$,  for all $i\in\{1,\ldots,Q_{\text{max}}+1\}$ and
$
\frac{\mathrm{d}\vec m(t)}{\mathrm{d}t}=\vec m(t+1)-\vec m(t)=U(\vec s(t))\vec m(t),
$
where $U(\vec s(t))$ is defined below.

\begin{proposition}\label{prop:sol_1} Let $s_i(t)$ be the action with respect to users in state $(h_i,\sigma(i))$. If $h_i=0$ or $\sigma(i)=0$ then $s_i(t)=0$ 
\end{proposition}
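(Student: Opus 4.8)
The plan is to dispose of the two cases $h_i=0$ and $\sigma(i)=0$ separately, showing in each that assigning positive power to the users in state $(h_i,\sigma(i))$ can only worsen the objective in \eqref{eq_obj} while leaving the constraint trajectory $\vec m(\cdot)$ unchanged; any optimal control must therefore take $s_i(t)=0$.

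The case $h_i=0$ I would treat as essentially immediate from the change of variables. When $h_i=0$ the numerator $h_ip_i(t)$ of $\SINR_i(t)$ vanishes for every admissible power, so $\SINR_i(t)=0$ and hence $s_i(t)=\SINR_i(t)/\theta=0$; this is precisely the degenerate branch already singled out in the reformulation (where $p_i^*=s_i=0$ for $h_i=0$). For completeness I would add that such a user contributes nothing to the interference term $\sum_jp_j(t)h_jm_j(t)$ and can never meet the SINR constraint \eqref{eq:thetadefinition}, so any power spent on it is pure waste; thus $s_i(t)=0$ is without loss of optimality.

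For $\sigma(i)=0$ the key step is to show that the controlled drift is insensitive to $s_i(t)$. Starting from the queue recursion \eqref{eqn3}, a user $n$ with $Q_n^\phi(t)=0$ satisfies $Q_n^\phi(t+1)=\max\{0-R_n(\mathbf{h}(t),\mathbf{p}(t)),0\}+A_n(t)=A_n(t)$, independently of $R_n\in\{0,1\}$, i.e.\ independently of whether its SINR clears $\theta$. Hence the conditional one-step kernels out of a queue-empty state coincide, $\gamma_{i,j}^1=\gamma_{i,j}^0$ for all $j$, and therefore $\nu_{i,j}^\phi(\mathbf{m})=g_i(\mathbf{m})\gamma_{i,j}^1+(1-g_i(\mathbf{m}))\gamma_{i,j}^0=\gamma_{i,j}^1$ carries no dependence on $g_i(\mathbf{m})$, and hence none on $s_i(t)$, since $s_i(t)$ enters $U(\vec s(t))$ only through $g_i$. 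Consequently the trajectory $\vec m(\cdot)$ solving $\dot{\vec m}=U(\vec s)\vec m$ and the queue cost $\lambda\sigma(i)m_i(t)=0$ are both unaffected by $s_i(t)$, while raising $s_i(t)$ from $0$ strictly increases the power cost in \eqref{eq_obj} whenever $m_i(t)>0$ — directly through $p_i^*(t)=\theta N_0 s_i(t)/\big(h_i(1-\sum_j\theta s_j(t)m_j(t))\big)>0$, and via the shrinking common denominator $1-\sum_j\theta s_j(t)m_j(t)$, which inflates $p_j^*(t)$ for every state $j$ — whereas if $m_i(t)=0$ the value of $s_i(t)$ is immaterial. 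In either case, replacing $s_i(\cdot)$ by the identically zero control leaves $\vec m(\cdot)$ unchanged and does not increase (and, on any set where $s_i>0$ and $m_i>0$, strictly decreases) the value of \eqref{eq_obj}; hence it is optimal to set $s_i(t)=0$.

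The main obstacle I anticipate is making ``the drift does not depend on $s_i(t)$'' fully rigorous, that is, confirming that in the reformulated problem $s_i(t)$ influences $U(\vec s(t))$ only through the fraction $g_i$ of satisfied users in state $i$, and not through the other $g_j$, $j\neq i$, via the interference coupling. This should follow from the change of variables itself — each user's SINR is pinned at $\theta s_j(t)$, so $g_j$ is a function of $s_j(t)$ alone — together with the observation that the substituted power vector $\vec p^*$ appears only in the objective and never in $U$; once this bookkeeping is spelled out, the remaining monotonicity argument is elementary.
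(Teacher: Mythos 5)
Your proof is correct and supplies exactly the elementary argument the paper leaves implicit (the paper's own ``proof'' is the single sentence ``The proof is straightforward''): for $h_i=0$ the SINR is identically zero so no power level can meet the threshold, and for $\sigma(i)=0$ the recursion \eqref{eqn3} gives $Q_n^\phi(t+1)=A_n(t)$ regardless of $R_n$, so $\gamma^1_{i,\cdot}=\gamma^0_{i,\cdot}$, the drift is insensitive to $s_i$, and any positive $s_i$ only adds power cost (both directly and through the common interference denominator). Your closing observation that $s_i$ enters $U$ only through $g_i$ while the interference coupling lives solely in the cost via $\vec p^*$ is the right way to make this airtight, and it is consistent with the paper's reformulation where $p^*_i=s_i=0$ is imposed by definition when $h_i=0$.
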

\begin{proof}
The proof is straightforward.
\end{proof}

The first step to obtain a bias optimal solution is to characterize an optimal equilibrium cost $E^*$. In order to do so we first compute the conditions under which objective function~\eqref{eq_obj} is convex. Let us define $(\bar m_i, \bar s_i)$ for all $i$ such that 
\begin{align}
U(\bar s_1,\ldots,\bar s_{K(Q_{\max}+1)}) \cdot(\bar m_1,\ldots,\bar m_{K(Q_{\max}+1)}
)'=0\end{align}
\begin{assumption}\label{ass3}
	For mathematical tractability, we will assume in the remaining of this section that $H=\{0,1\}$ (i.e. GOOD/BAD state) and $Q_{\max}=1$. The assumption $Q_{\max}=1$ is meaningful in the context where the transmitters are MTC (Machine Type Communications) devices or IoT objects (e.g. sensors) that transmit some updated estimations/parameters. Once a new estimation arrives, the old one in the buffer becomes useless and is dropped. In this case, we have $Q_{\max}=1$.
\end{assumption}
 In this case $\bar s_i=0$ for all $i=1,2,3$. Therefore the cost at equilibrium, $E(\vec{\bar s})$,  equals
\begin{align}\label{equili_cost}
&E(\vec{\bar s})=\frac{\theta N_0\bar s_4}{h_i(1-\theta\bar m_4)}+\lambda\bar m_4.
\end{align}
Namely,
$
E(\vec{\bar s})=C(\vec{\bar m},0)(1-\bar s_4)+C(\vec{\bar m},1)\bar s_4,
$
with $C(\bar m,0)=\lambda(\bar m_2+\bar m_4)$ and $C(\bar m,1)=\frac{\theta N_0}{1-\theta m_4}+\lambda(\bar m_2+\bar m_4)$. This equilibrium cost can be interpreted as the cost of being passive times the fraction of time the system is passive plus the cost of being active multiplied by the fraction of time the system is active.

Throughout the paper we will denote the optimal equilibrium point by $\vec m^*$ and the optimal control by $\vec s^*$. The optimal average cost, $E^*$ is therefore
$
E^*=\frac{\theta N_0s^*_4}{1-\theta m_4^*}+ \lambda(m_2^*+m_4^*).
$

We have assumed that the channel is either in a GOOD state $h=1$ or in a BAD state, i.e., $h=0$. In this case we aim at determining $\vec s(t)=(s_1(t),\ldots,s_4(t))$ for all $t$. By Proposition~\ref{prop:sol_1} we have that $s_1(t)=s_2(t)=s_3(t)=0$ for all $t$. The objective is therefore to determine $s_4(t)$ for all $t$. 

 We denote by $\rho$ the arrival probability and by $\beta_0$ the probability of being in the BAD state and by $\beta_1$ the probability of being in the GOOD state. We have that $U(\vec\bar s)$ equals
\begin{align*}
\begin{pmatrix}
-\beta_1-\beta_0\rho & 0  & \beta_0(1-\rho) & \bar s_4\beta_0(1-\rho)\\
\beta_0\rho & -\beta_1 & \beta_0\rho & \bar s_4\beta_0\rho+(1-\bar s_4)\beta_0\\
\beta_1(1-\rho) & 0 &-\beta_1\rho-\beta_0&\bar s_4\beta_1(1-\rho)\\
\beta_1\rho& \beta_1 & \beta_1\rho & -\bar s_4\beta_1(1-\rho)-\beta_0.
\end{pmatrix}
\end{align*}
By solving $U(\vec{\bar s})\cdot (\bar m_1,\bar m_2, \bar m_3,\bar m_4)'=0$ we obtain
\begin{align}\label{equilibrium_case1}
\bar m_1&= \frac{\beta_0\beta_1(1-\rho)\bar s_4}{\rho+\beta_1\bar s_4(1-\rho)};\bar m_2&= \frac{\beta_0\rho}{\rho+\beta_1\bar s_4(1-\rho)}\nonumber\\
\bar m_3&= \frac{\beta_1^2(1-\rho)\bar s_4}{\rho+\beta_1\bar s_4(1-\rho)};\bar m_4&= \frac{\beta_1\rho}{\rho+\beta_1\bar s_4(1-\rho)}
\end{align}

\begin{proposition}\label{prop:convexity} Let $\vec{\bar s}=(0,0,0,\bar s_4)$ and $\vec{\bar m}$ given by Equation~\eqref{equilibrium_case1}. Assume 
$
N_0\leq \frac{\lambda(1-\rho)(1-\beta_1\theta)^2}{\rho\theta^2}.
$
 Then $E(\vec{\bar s})$ as given by Equation~\eqref{equili_cost} is convex.
\end{proposition}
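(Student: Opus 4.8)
The plan is to use that, under Assumption~\ref{ass3}, an equilibrium is determined by the single scalar $x:=\bar s_4\in[0,1]$, since $\bar s_1=\bar s_2=\bar s_3=0$ by Proposition~\ref{prop:sol_1}. By Equation~\eqref{equilibrium_case1} the whole vector $\vec{\bar m}$ is an explicit rational function of $x$, so substituting it into the representation $E(\vec{\bar s})=C(\vec{\bar m},0)(1-\bar s_4)+C(\vec{\bar m},1)\bar s_4$ turns $E$ into a scalar function $E(x)$; convexity of $E(\vec{\bar s})$ means convexity of $x\mapsto E(x)$ on $[0,1]$, i.e.\ $E''(x)\ge 0$. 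So the whole statement reduces to a one-variable convexity estimate.

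First I would make $E(x)$ explicit. Writing $b:=\beta_1(1-\rho)$, from \eqref{equilibrium_case1} one has $\bar m_4(x)=\beta_1\rho/(\rho+bx)$ and $\bar m_2(x)+\bar m_4(x)=\rho/(\rho+bx)$; inserting $1-\theta\bar m_4(x)=(\rho(1-\theta\beta_1)+bx)/(\rho+bx)$ into the power term and clearing denominators gives
\begin{gather*}
E(x)=T_2(x)+T_1(x),\qquad T_2(x):=\frac{\lambda\rho}{\rho+bx},\\
T_1(x):=\frac{\theta N_0\,x(\rho+bx)}{\rho(1-\theta\beta_1)+bx}.
\end{gather*}
By Assumption~\ref{ass1}, $\rho(1-\theta\beta_1)>0$ and $1-\theta\bar m_4(x)\ge 1-\theta\beta_1>0$ on $[0,1]$, so both summands are well defined and smooth there.

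Next I would differentiate twice. The queue term is manifestly convex, $T_2''(x)=2\lambda\rho b^2/(\rho+bx)^3>0$. Setting $c:=\rho(1-\theta\beta_1)<\rho$, a direct computation on the rational function $T_1$ yields $T_1''(x)=2\theta N_0\,bc\,(c-\rho)/(bx+c)^3<0$, so $T_1$ is \emph{concave}; this is precisely why a smallness condition on $N_0$ is needed. Hence $E''(x)\ge 0$ on $[0,1]$ is equivalent to
\begin{gather*}
\theta N_0\,\frac{c(\rho-c)}{(bx+c)^3}\le \lambda\rho\,\frac{b}{(\rho+bx)^3},\\
\text{equivalently}\quad N_0\le \frac{\lambda\rho\,b}{\theta\,c(\rho-c)}\Big(\frac{bx+c}{\rho+bx}\Big)^{3}.
\end{gather*}
Substituting $c=\rho(1-\theta\beta_1)$, $\rho-c=\theta\beta_1\rho$ and $b=\beta_1(1-\rho)$, the prefactor collapses to $\lambda(1-\rho)/\big(\theta^2\rho(1-\theta\beta_1)\big)$, while $x\mapsto(bx+c)/(\rho+bx)$ is increasing on $[0,1]$ (its derivative is $b(\rho-c)/(\rho+bx)^2>0$), hence minimal at $x=0$ with value $c/\rho=1-\theta\beta_1$. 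Therefore the inequality holds for every $x\in[0,1]$ as soon as $N_0\le\lambda(1-\rho)(1-\theta\beta_1)^2/(\theta^2\rho)$, which is exactly the hypothesis, and $E(\vec{\bar s})$ is convex.

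The calculation is elementary once $E$ is written as the scalar $E(x)$; the only point needing care is the bookkeeping showing that the binding case in the convexity inequality is $\bar s_4=0$, where $\bar m_4$ is largest so that the denominator $1-\theta\bar m_4$ is smallest and the concave power term is steepest relative to the convex queue term. (Using the literal queue cost $\lambda\bar m_4$ of \eqref{equili_cost} instead of $\lambda(\bar m_2+\bar m_4)$ only inserts an extra factor $\beta_1$ on the right of the sufficient condition and leaves the argument unchanged.)
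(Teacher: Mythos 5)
Your proof is correct and follows essentially the same route as the paper's: reduce $E$ to a scalar function of $\bar s_4$, compute the second derivative, observe that $E''\ge 0$ is equivalent to $N_0\le f(\bar s_4)$ where $f(\bar s_4)=\frac{\lambda(1-\rho)}{\theta^2\rho(1-\theta\beta_1)}\bigl(\tfrac{b\bar s_4+c}{\rho+b\bar s_4}\bigr)^3$ is increasing, and conclude that the hypothesis $N_0\le f(0)$ suffices. Your partial-fraction split of the power term into an affine part plus $-(\rho-c)c/(b(bx+c))$ just streamlines the same second-derivative computation, and you correctly note (as the paper implicitly does by working with $\lambda(\bar m_2+\bar m_4)$) which form of the queue cost is actually being used.
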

See Appendix~\ref{convexityproof} for the proof.
\subsubsection{An average optimal control}

In the next proposition we characterize the optimal equilibrium point. The result is characterized by the following constants. 
\begin{align}\label{eq:constants}
&N_0^0=\frac{\lambda\beta_1(1-\rho)(1-\theta\beta_1)}{\rho\theta},\\
&N_0^1=\frac{\lambda\beta_1(1-\rho)\rho(\rho+\beta_1-\beta_1\rho(1+\theta))^2/(\beta_1+\rho-\beta_1\rho)^2}{\theta(2\beta_1(1-\rho)\rho(1-\theta\beta_1)+\rho^2(1-\theta\beta_1)+\beta_1^2(1-\rho)^2)}
\end{align}
\begin{proposition}\label{prop:average_optimal} Let $\vec{s}^*$ be given by $(0,0,0, s_4^*)$, and let $N_0$ be as in (10) 
then  
\begin{itemize}
\item $s_4^*=0$ and $\vec{m}^*=(0,\beta_0,0,\beta_1)$ if $N_0\geq N_0^0.$
\item $s_4^*=1$ and 
\begin{align*}
&m_1^*=\frac{\beta_0\beta_1(1-\rho)}{\beta_1(1-\rho)+\rho}; \quad m_2^*=\frac{\beta_0\rho}{\rho+\beta_1(1-\rho)},\\
&m_3^*=\frac{\beta_1^2(1-\rho)}{\rho+\beta(1-\rho)};\quad m_4^*=\frac{\beta_1\rho}{\rho+\beta_1(1-\rho)}),
\end{align*}
 if $N_0\leq N_0^1.$
\item And $s_4^*\in(0,1)$ if $N_0^1<N_0<N_0^0$, with $N_0$ given by $\lambda\frac{(1-\rho)\bar m_4^2(1-\theta \bar m_4)^2}{\theta\rho(\theta\bar m_4(\bar m_4-2\beta_1)+\beta_1)}$.

\end{itemize}
\end{proposition}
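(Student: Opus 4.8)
The plan is to collapse the equilibrium-selection problem to a one-variable convex minimization and then read off the three regimes from the sign of the derivative at the two endpoints $\bar s_4\in\{0,1\}$. First I would substitute Proposition~\ref{prop:sol_1} (so $\bar s_1=\bar s_2=\bar s_3=0$) and the equilibrium measures~\eqref{equilibrium_case1} into the equilibrium cost $E(\vec{\bar s})=C(\vec{\bar m},0)(1-\bar s_4)+C(\vec{\bar m},1)\bar s_4=\theta N_0\bar s_4/(1-\theta\bar m_4)+\lambda(\bar m_2+\bar m_4)$. Since $\beta_0+\beta_1=1$ we have $\bar m_2+\bar m_4=\rho/D$ with $D:=\rho+\beta_1(1-\rho)\bar s_4$ and $\bar m_4=\beta_1\rho/D$, so $E$ becomes the explicit scalar function
\[
E(\bar s_4)=\frac{\theta N_0\,\bar s_4\,D}{D-\theta\beta_1\rho}+\frac{\lambda\rho}{D},\qquad \bar s_4\in[0,1].
\]
By Proposition~\ref{prop:convexity} this function is convex on $[0,1]$ under the stated bound on $N_0$, so its minimiser is the left endpoint $\bar s_4^*=0$ when $E'(0)\ge0$, the right endpoint $\bar s_4^*=1$ when $E'(1)\le0$, and the unique interior stationary point when $E'(0)<0<E'(1)$.

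Next I would differentiate: with $A=\theta\beta_1\rho$ and $D'=\beta_1(1-\rho)$,
\[
E'(\bar s_4)=\theta N_0\,\frac{D(D-A)-A\,\bar s_4\,D'}{(D-A)^2}-\frac{\lambda\rho\,D'}{D^2}.
\]
At $\bar s_4=0$ (so $D=\rho$) this collapses to $E'(0)=\theta N_0/(1-\theta\beta_1)-\lambda\beta_1(1-\rho)/\rho$, hence $E'(0)\ge0$ is exactly $N_0\ge N_0^0$; substituting $\bar s_4=0$ in~\eqref{equilibrium_case1} gives $\vec m^*=(0,\beta_0,0,\beta_1)$, which is the first bullet. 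At $\bar s_4=1$ (so $D=\rho+\beta_1(1-\rho)$), expanding $D(D-A)-AD'$ as $\rho^2(1-\theta\beta_1)+2\beta_1\rho(1-\rho)(1-\theta\beta_1)+\beta_1^2(1-\rho)^2$ shows that $E'(1)\le0$ is exactly $N_0\le N_0^1$; substituting $\bar s_4=1$ in~\eqref{equilibrium_case1} gives the $\vec m^*$ displayed in the second bullet.

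For $N_0^1<N_0<N_0^0$, convexity together with $E'(0)<0<E'(1)$ forces a unique interior stationary point $\bar s_4^*\in(0,1)$. Solving $E'(\bar s_4^*)=0$ for $N_0$ and re-expressing $D$ through $\bar m_4^*=\beta_1\rho/D$ (so that $D-A=\beta_1\rho(1-\theta\bar m_4^*)/\bar m_4^*$ and $A\,\bar s_4^*\,D'=\theta\beta_1\rho^2(\beta_1-\bar m_4^*)/\bar m_4^*$) turns the stationarity equation into
\[
N_0=\frac{\lambda(1-\rho)(\bar m_4^*)^2(1-\theta\bar m_4^*)^2}{\theta\rho\big(\theta\bar m_4^*(\bar m_4^*-2\beta_1)+\beta_1\big)},
\]
which is the third bullet; one then checks that this map $\bar m_4^*\mapsto N_0$ is strictly monotone on $\bar m_4^*\in(\beta_1\rho/(\rho+\beta_1(1-\rho)),\beta_1)$, so it defines $s_4^*\in(0,1)$ uniquely for each such $N_0$.

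The routine part is the algebra — differentiating $E$ and collapsing the rational expressions at the endpoints, in particular matching $E'(1)$ against the opaque constant $N_0^1$ of~\eqref{eq:constants}, which I would do by clearing denominators and comparing polynomials in $\beta_1,\rho,\theta$. The genuinely delicate point is that all three cases rely on convexity of $E$, i.e.\ on the hypothesis $N_0\le\lambda(1-\rho)(1-\beta_1\theta)^2/(\rho\theta^2)$ of Proposition~\ref{prop:convexity}: I would verify that the relevant $N_0$-range — in particular $(N_0^1,N_0^0)$, and the endpoint regimes actually invoked — lies below this bound, equivalently that $\beta_1\theta$ is bounded away from $1$; for the boundary cases where $N_0$ may exceed it, one can instead use the structural fact that $E'(\bar s_4)=\theta\,\psi(\bar s_4)\big(N_0-\Psi(\bar s_4)\big)$ with $\psi>0$ and $\Psi$ an $N_0$-independent function satisfying $\Psi(0)=N_0^0$ and $\Psi(1)=N_0^1$, whose monotonicity gives the sign of $E'$ directly.
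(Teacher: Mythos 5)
Your proposal is correct and follows essentially the same route as the paper's proof: reduce $E$ to a scalar convex function of $\bar s_4$ via the equilibrium measures~\eqref{equilibrium_case1}, read off the $N_0\ge N_0^0$ and $N_0\le N_0^1$ regimes from the sign of $E'$ at the endpoints $\bar s_4=0,1$, and obtain the interior case by solving the stationarity condition and substituting $\bar s_4=\rho(\beta_1-\bar m_4)/(\beta_1(1-\rho)\bar m_4)$ to express $N_0$ through $\bar m_4$. Your closing remark about verifying that the convexity hypothesis of Proposition~\ref{prop:convexity} covers the relevant $N_0$-range is a point the paper dispatches simply by assuming condition~(10) in the statement, so no further argument is needed there.
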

See Appendix~\ref{average_optimal_policy} for the proof.


Proposition~\ref{prop:average_optimal} suggests that a threshold policy in $m_4$ is optimal for the problem presented in Equation~\eqref{eq_obj}. This is shown in the next section.

\subsubsection{Bias-optimal solution}
In this section we derive an optimal solution for the deterministic control problem in Equation~\eqref{eq_obj}. In the previous section we have characterized the optimal equilibrium point based on the value of $N_0$. In Proposition~\ref{prop:bias_optimal} (see the proof in Appendix E). We determine a bias-optimal control policy. Recall that, we are interested in this solution since the average optimal cost obtained in Proposition~\ref{prop:average_optimal} can be achieved by any control policy with equilibrium point $\vec{m}^*$.
\begin{proposition}\label{prop:bias_optimal}
Let $N_0^0$ and $N_0^1$ be given by Equation~\eqref{eq:constants}. An optimal solution for problem~\eqref{eq_obj} is:
\begin{itemize}
\item If $N_0\geq N_0^0$, then $s_4(t)=0$ if $m_4(t)\leq m_4^1$ and $s_4(t)=1$ otherwise.
\item If $N_0\leq N_0^1$, then $s_4(t)=0$ if $m_4(t)\leq m_4^0$ and $s_4(t)=1$ otherwise.
\item If $N_0\in(N_0^1,N_0^0)$, then $s_4(t)=0$ if $m_4(t)\leq m_4^*$ and $s_4(t)=1$ otherwise.
We note that $m_4^1=\beta_1\rho/(\rho+\beta_1(1-rho))$, $m_4^0=\beta_1$ and $m_4^*$ solution of $\mathrm{d}E^*(\bar s_4)/\mathrm{d}\bar s_4$.
\end{itemize}
\end{proposition}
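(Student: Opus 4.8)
The plan is to verify that a suitably constructed relative‑value (bias) function $W$, together with the announced threshold rule, solves the Hamilton--Jacobi--Bellman equation of the average‑cost problem~\eqref{eq_obj}. After Proposition~\ref{prop:sol_1} forces $s_1=s_2=s_3=0$, the running cost is the scalar
\begin{equation*}
c(\vec m,s_4)=\frac{\theta N_0\,s_4}{1-\theta s_4 m_4}+\lambda(m_2+m_4),
\end{equation*}
and the HJB equation for bias optimality reads $\min_{s_4\in[0,1]}\{\,c(\vec m,s_4)-E^*+\nabla W(\vec m)^{\top}U(\vec s)\vec m\,\}=0$ with $W(\vec m^*)=0$, where $E^*$ and $\vec m^*$ are the optimal equilibrium cost and point of Proposition~\ref{prop:average_optimal}. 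By the usual verification argument for deterministic average‑cost control (integrate $\tfrac{d}{dt}W(\vec m(t))$ along trajectories; the boundary term drops because $W$ is bounded and the trajectories under comparison converge to $\vec m^*$), any feedback policy attaining the minimum in the HJB equation along its own trajectory is bias optimal; so it suffices to exhibit $W$ and to check the minimizer.

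First I would reduce the dimension of the controlled dynamics $\dot{\vec m}=U(\vec s)\vec m$. From the displayed matrix $U(\vec s)$ one reads off that (a) the channel marginal relaxes autonomously, $\tfrac{d}{dt}(m_1+m_2)=\beta_0-(m_1+m_2)$; (b) the running cost depends on $\vec m$ only through the queue‑one mass $w=m_2+m_4$ and through $m_4$; and (c) the control enters only through the direction $\partial_{s_4}(U(\vec s)\vec m)=m_4(1-\rho)(\beta_0,-\beta_0,\beta_1,-\beta_1)$, i.e.\ activating users in state $4$ merely shifts mass from queue one to queue zero. Hence the essential controlled system is the two‑dimensional $(w,m_4)$‑system, whose equilibria under $s_4\equiv 0$ and $s_4\equiv 1$ are precisely those in~\eqref{equilibrium_case1}; Proposition~\ref{prop:average_optimal} tells us which of these (or which interior point) is the optimal target $\vec m^*$ as a function of $N_0$, and in each regime the switching level of the candidate policy coincides with the $m_4$‑coordinate of that $\vec m^*$.

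Next I would take $W(\vec m)=\int_0^{\infty}\big(c(\vec m^{\phi}(t),s_4^{\phi}(t))-E^*\big)\,dt$, the transient cost of the candidate threshold policy $\phi$ started at $\vec m$. Under $\phi$ the reduced dynamics is piecewise linear, and I would show, using Assumptions~\ref{ass1}--\ref{ass2} together with the monotonicity/convexity information of Proposition~\ref{prop:convexity}, that $\phi$ drives $(w,m_4)$ to $\vec m^*$ from every initial condition — $m_4$ being pushed monotonically toward its threshold value, with a sliding mode at the threshold, and $w$ then following — so that the integral converges and $W$ is piecewise smooth with $W(\vec m^*)=0$. By construction $\tfrac{d}{dt}W(\vec m^{\phi}(t))=-(c-E^*)$ along $\phi$‑trajectories, which is exactly the statement that $\phi$ attains the value $0$ in the bracketed expression. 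The one delicate point is differentiability of $W$ across the switching surface $\{m_4=\text{threshold}\}$, which I would settle either by a $C^1$‑matching of the two branches or, failing that, by reading the HJB equation in the viscosity sense, for which the piecewise‑smooth $W$ still serves.

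The crux is the pointwise minimization: one must check that the bracketed expression, viewed as a function of $s_4\in[0,1]$ with the constructed $\nabla W$, is minimized at $s_4=0$ below the threshold and at $s_4=1$ above it. The key structural fact is that $s_4\mapsto c(\vec m,s_4)$ is convex and increasing, since $\partial_{s_4}c=\theta N_0/(1-\theta s_4 m_4)^2>0$ and $\partial^2_{s_4}c\ge 0$, while $s_4\mapsto\nabla W(\vec m)^{\top}U(\vec s)\vec m$ is affine; hence the bracket is convex in $s_4$ and its minimizer over $[0,1]$ is an endpoint unless $\partial_{s_4}(\cdot)$ changes sign inside $(0,1)$. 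Evaluating $\partial_{s_4}(\cdot)$ at $s_4=0$ and at $s_4=1$ and inserting the explicit $\nabla W$ reduces the endpoint conditions to inequalities in $m_4$ alone, which I expect to say exactly: $s_4=0$ is optimal iff $m_4$ lies below the threshold, $s_4=1$ iff above, with the remaining ``interior'' window collapsing to the single point $m_4=m_4^*$ precisely in the regime $N_0\in(N_0^1,N_0^0)$ — which is where the constants in~\eqref{eq:constants} enter, and matching them to the three cases of the statement is the principal computation. This sign analysis, together with the switching‑surface regularity from the previous step, is the main obstacle; the rest is the routine verification‑theorem bookkeeping, after which bias optimality of the threshold policy follows.
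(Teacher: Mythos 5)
Your overall strategy is the same as the paper's: a verification argument in which one exhibits a bias function $W$ (the paper's $V$), checks the HJB condition $\min\{\mathcal V_0,\mathcal V_1\}=0$, and shows the minimizer switches at a level of $m_4$ that matches the optimal equilibrium of Proposition~\ref{prop:average_optimal}. Your structural observations (autonomy of the channel marginal, affineness of the drift in $s_4$, convexity of the running cost in $s_4$, hence endpoint or sliding-mode minimizers) are correct and consistent with the displayed matrix $U(\vec s)$. However, the proposal stops exactly where the proof begins: the construction of $\nabla W$ and the reduction of the endpoint sign conditions to the stated thresholds $m_4^0$, $m_4^1$, $m_4^*$ are announced as ``the principal computation'' and left as an expectation rather than carried out. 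Without them nothing ties the switching level to the constants $N_0^0$, $N_0^1$ of Equation~\eqref{eq:constants}, which is the entire content of the proposition.

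Concretely, the paper closes this gap as follows, and none of these steps appears in your sketch. First, it solves the linear ODEs for the trajectories under each fixed action in closed form and deduces the symmetry $\partial V/\partial m_1=\partial V/\partial m_3$, which collapses the gradient to two unknown components. Second, it imposes three conditions on the switching surface --- indifference $\mathcal V_0=\mathcal V_1$, optimality of the passive action $\mathcal V_0=0$, and equality of the mixed second partials $\partial^2V/\partial m_1\partial m_4=\partial^2V/\partial m_4\partial m_1$ --- and solves the first two for $\partial V/\partial m_1$ and $\partial V/\partial m_4$ in closed form. Third, substituting these into the mixed-partial condition yields an explicit relation between $N_0$ and $m_4$ on the switching surface, Equation~\eqref{eq:N0_HJBexpression}, which after using the equilibrium identity $\bar\varphi_4(\vec{\bar m})=0$ is shown to coincide with the equilibrium relation~\eqref{FinalN0}; this is what identifies the threshold with $m_4^*$ (and, at the boundary regimes, with $m_4^0=\beta_1$ and $m_4^1=\beta_1\rho/(\rho+\beta_1(1-\rho))$). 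Your proposed alternative of differentiating $W(\vec m)=\int_0^\infty(c-E^*)\,dt$ directly through the hitting time of the threshold is in principle workable but is not performed, and your fallback to viscosity solutions for the regularity across the switching surface sidesteps rather than resolves the $C^1$-matching that the paper's third condition enforces. As it stands the proposal is a plan for a proof, not a proof.
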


Proposition~\ref{prop:bias_optimal} tells us that the optimal solution of the mean-field approximation is of threshold type. That is, it suffices to compare the fraction of users that have one packet to transmit and are  in a good channel state with respect to $N_0$. This is a very simple heuristic for the original stochastic problem and as we will see in the next section is nearly optimal.

\section{Numerical results}\label{section:num}
In this section we numerically evaluate our mean-field solution as proposed in Proposition~\ref{prop:bias_optimal}. We compare its performance with respect to the numerical optimal solution (obtained through Value Iteration (VI)) of the original stochastic problem as presented in the beginning of  Section~\ref{s3}. 

We consider the following example. There are 10 transmitters in the system, two possible channel qualities GOOD/BAD  and each user has at most 1 packet to transmit. The latter is motivated by MTC where each machine (e.g., sensors) has few packets (e.g., temperature) to transmit. We assume $\theta = 0.2, \beta_1 = 0.4, \lambda = 1.5$ and $N_0 = 1$. The packet arrival probability $\rho$ will vary between 0.05 and 0.3. These values satisfy Assumptions~\ref{ass1} and~\ref{ass2}. We observe in the table below that our proposed solution is nearly-optimal across all values of $\rho$. We compute the relative error $|g^{MF}-g^{VI}|*100/g^{MF}$ and the absolute error $|g^{MF}-g^{VI}|*100$, where $g^{MF}$ is the average cost incurred by our policy and $g^{VI}$ the optimal average cost computed using VI.


\noindent

\vspace{0.2cm}
\begin{tabular}{ |p{1.8cm}|p{1cm}|p{1cm}|p{1cm}| p{1cm}|}
 \hline
$\qquad\,\,\,\rho$ &$\,\,$ 0.05 & $\,\,\,\,$0.1  & $\,\,\,\,$0.2  & $\,\,\,\,$0.3\\
 \hline
Rel.Err (\%) & 0.1233 & 1.1158  & 0.0164  & 0.3494\\
Abs.Err (\%)& 0.0002 & 0.0032 & 0.00007 & 0.0019 \\\hline
\end{tabular}

\section{Conclusion}\label{section:conclusion}
We have studied the problem of power allocation in large wireless networks  taking into account both channel state information
and queue state information.  We identified an MDP formulation of this problem
and in view of the state space explosion, we performed a mean-field approximation
and let the number of devices grow to infinity so as to obtain a
deterministic control problem. By solving the HJB Equation, we derived a well-performing power allocation policy for the original stochastic problem, which turns out to be a threshold-based policy and can then be efficiently implemented in real-life wireless networks.

\bibliographystyle{plain}\bibliography{IEEEabrv,bibli1}
\newpage
\section{Appendix}

\subsection{Expressions of transition probabilities of the MDP}\label{prob_expressions}
Here we provide expressions of the transition probabilities of the MDP defined in Section~\ref{s3}. We have
\begin{align*}
&\mathbb{P}(Q_n^\phi(t+1)=q|Q_n^\phi(t)=q',\mathbf{h}(t),\mathbf{p}(t))\nonumber\\
&=\mathbb{P}(Q_n^\phi(t+1)=q'|Q_n^\phi(t)=q',\mathbf{h}(t),\mathbf{p}(t))\nonumber\\
&\quad+\mathbb{P}(Q_n^\phi(t+1)=q'+1|Q_n^\phi(t)=q',\mathbf{h}(t),\mathbf{p}(t))\nonumber\\
&\quad+\mathbb{P}(Q_n^\phi(t+1)=[q'-1,0]^+|Q_n^\phi(t)=q',\mathbf{h}(t),\mathbf{p}(t)),\nonumber
\end{align*}
where $[q'-1,0]^+=\max\{q'-1,0\}$. Recall that $\mathbb{P}(A_n(t)=1)=\rho$ we therefore have 
\begin{align*}
&\mathbb{P}(Q_n^\phi(t+1)=q'|Q_n^\phi(t)=q',\mathbf{h}(t),\mathbf{p}(t))\nonumber\\
&=\rho\mathbf{1}_{\{\SINR_n(t)\geq\theta\}}+(1-\rho)\mathbf{1}_{\{\SINR_n(t)<\theta\}},\nonumber\\
&\mathbb{P}(Q_n^\phi(t+1)=q'+1|Q_n^\phi(t)=q',\mathbf{h}(t),\mathbf{p}(t))\nonumber\\
&=\rho\mathbf{1}_{\{\SINR_n(t)<\theta\}}\nonumber\\
&\mathbb{P}(Q_n^\phi(t+1)=[q'-1,0]^+|Q_n^\phi(t)=q',\mathbf{h}(t),\mathbf{p}(t))\nonumber\\
&=(1-\rho)\mathbf{1}_{\{\SINR_n(t)\geq\theta\}}.\nonumber
\end{align*}
\subsection{Transition probabilities}\label{table}
The transition probabilities can be found in Table~I below.
\begin{table*}
\caption{Transition probabilities from state $s_i$ to $s_j$}\label{tansitions}
\centering
\begin{minipage}{0.85\textwidth}
\noindent
{\bf Independent and identically distributed channel model:}
\begin{align*}
\gamma_{i,j}^1(\mathbf{m})&=
\begin{cases}
\beta_{\ell}\rho, &\hbox{if }  j=\ell (Q_{\text{max}}+1)+r \text{ with } 1\leq r\leq Q_{\text{max}}+1 \text{ and } \ell\in\{0,\ldots,K-1\},\\
&\text{and } i=\ell' (Q_{\text{max}}+1)+r' \text{ with } r=r' \text{ and } \ell'\in\{0,\ldots,K-1\}, \\
\beta_{\ell}(1-\rho), &\hbox{if } j=\ell (Q_{\text{max}}+1)+r \text{ with } 1\leq r\leq Q_{\text{max}}+1 \text{ and } \ell\in\{0,\ldots,K-1\},\\
&\text{and } i=\ell' (Q_{\text{max}}+1)+r' \text{ with } r=\max\{r'-1,1\} \text{ and } \ell'\in\{0,\ldots,K-1\},\\
0, &\hbox{otherwise },
\end{cases}\nonumber\\
\medskip
\gamma_{i,j}^0(\mathbf{m})&= 
\begin{cases}
\beta_{\ell}(1-\rho), &\hbox{if } j=\ell (Q_{\text{max}}+1)+r \text{ with } 1\leq r\leq Q_{\text{max}}+1 \text{ and } \ell\in\{0,\ldots,K-1\}, \\
&\text{and } i=\ell' (Q_{\text{max}}+1)+r' \text{ with } r=r' \text{ and } \ell'\in\{0,\ldots,K-1\},\\
\beta_{\ell}\rho, & \hbox{if } j=\ell (Q_{\text{max}}+1)+r \text{ with } 1\leq r\leq Q_{\text{max}}+1 \text{ and } \ell\in\{0,\ldots,K-1\},\\
&\text{and } i=\ell' (Q_{\text{max}}+1)+r' \text{ with } r=\min\{r'+1,Q_{\text{max}}+1\} \text{ and } \ell'\in\{0,\ldots,K-1\},\\
0, &\hbox{otherwise}, 
\end{cases}
\end{align*}
\noindent
\textbf{Markov channel model:}
\begin{align*}
\gamma_{i,j}^1(\mathbf{m})&=
\begin{cases}
b_{\ell'\ell}\rho, & \hbox{if } j=\ell (Q_{\text{max}}+1)+r \text{ with } 1\leq r\leq Q_{\text{max}}+1 \text{ and } \ell\in\{0,\ldots,K-1\}, \\
&\text{and } i=\ell' (Q_{\text{max}}+1)+r' \text{ with } r=r' \text{ and } \ell'\in\{0,\ldots,K-1\},\\
b_{\ell'\ell}(1-\rho), & \hbox{if } j=\ell (Q_{\text{max}}+1)+r \text{ with } 1\leq r\leq Q_{\text{max}}+1 \text{ and } \ell\in\{0,\ldots,K-1\}, \\
&\text{and } i=\ell' (Q_{\text{max}}+1)+r' \text{ with } r=\min\{r'-1,1\} \text{ and } \ell'\in\{0,\ldots,K-1\},\\
0, &\hbox{otherwise },
\end{cases}\nonumber\\
\medskip
\gamma_{i,j}^0(\mathbf{m})&= 
\begin{cases}
b_{\ell'\ell}(1-\rho), &\hbox{if } j=\ell (Q_{\text{max}}+1)+r \text{ with } 1\leq r\leq Q_{\text{max}}+1 \text{ and } \ell\in\{0,\ldots,K-1\}, \\
&\text{and } i=\ell' (Q_{\text{max}}+1)+r' \text{ with } r=r' \text{ and } \ell'\in\{0,\ldots,K-1\},\\
b_{\ell'\ell}\rho, &\hbox{if }  j=\ell (Q_{\text{max}}+1)+r \text{ with } 1\leq r\leq Q_{\text{max}}+1 \text{ and } \ell\in\{0,\ldots,K-1\}, \\
&\text{and } i=\ell' (Q_{\text{max}}+1)+r' \text{ with } r=\min\{r'+1,Q_{\text{max}}+1\} \text{ and } \ell'\in\{0,\ldots,K-1\},\\
0, &\hbox{otherwise}, 
\end{cases}
\end{align*}
where  $\beta_i=\mathbb{P}(h_n(t)=c_i)$  (the probability that user $n$ is in channel state $c_i$ in the i.i.d. model).
\medskip
\hrule
\end{minipage}
\end{table*}

\subsection{Proof of Proposition~\ref{prop:convexity}}\label{convexityproof}
We denote $E(\bar s_4)=E(\vec{\bar s})$, since it only depends on the equilibrium control $\bar s_4$.
To prove convexity of $E(\bar s_4)$ is suffices to show that $\mathrm{d}^2E(\bar s_4)/\mathrm{d}s_4^2\geq 0$ for all $\bar s_4\in[0,1]$. 
Let us first compute $\partial \bar m_4/\partial \bar s_4$, namely,
$\frac{\partial \bar m_4}{\partial \bar s_4}=\frac{\beta_1^2(\rho-1)\rho}{(\rho+\beta_1\bar s_4(1-\rho))^2}.$
Therefore, the condition $\partial^2 \bar m_4/\partial \bar s_4^2\geq0$ simplifies to
$\frac{\partial^2 \bar m_4}{\partial \bar s_4^2}=\frac{2\beta_1^3(\rho-1)^2\rho}{(\rho+\beta_1\bar s_4(1-\rho))^3}\geq 0.$ To see that the latter is always greater or equal to 0 it suffices to recall that $0<\rho<1$, $0<\beta_1<1$ and $\bar s_4\in[0,1]$. 
Similarly,
$\frac{\partial^2 \bar m_2}{\partial \bar s_4^2}=\frac{2\beta_1^2\beta_0(\rho-1)^2\rho}{(\rho+\beta_1\bar s_4(1-\rho))^3}\geq 0.$ 
We now  compute $\frac{\partial^2E(\bar s_4)}{\partial \bar s_4^2}$, to do so, we first compute $\partial E(\bar s_4)/\partial \bar s_4$, namely,
\begin{align*}
 &\frac{\partial}{\partial \bar s_4}\left(\frac{\theta N_0\bar s_4}{1-\theta \bar m_4}+\lambda(\bar m_2+\bar m_4)\right)\\
&=\frac{\partial}{\partial \bar s_4}\left(\frac{\theta N_0\bar s_4}{1-\theta \frac{\beta_1\rho}{\rho+\beta_1\bar s_4(1-\rho)}}+\lambda\left(\frac{\rho}{\rho+\beta_1\bar s_4(1-\rho)}\right)\right)\\
& = \frac{\partial}{\partial \bar s_4}\left(\frac{\theta N_0\bar s_4(\rho+\beta_1\bar s_4(1-\rho))}{\rho+\beta_1\bar s_4(1-\rho)-\theta \beta_1\rho}+\frac{\lambda\rho}{(\rho+\beta_1\bar s_4(1-\rho))}\right),
\end{align*}
where the second inequality follows from the substitution of $\bar m_4=\beta_1\rho/(\rho+\beta_1\bar s_4(1-\rho))$ and $\bar m_2=(1-\beta_1)\rho/(\rho+\beta_1\bar s_4(1-\rho))$. We therefore have, after some algebra, $\frac{\partial E(\bar s_4)}{\partial s_4}$ equals
\begin{align*}
&\frac{\theta N_0(2\beta_1\rho\bar s_4(1-\rho)(1-\theta \beta_1)+\rho^2(1-\theta\beta_1)+\bar s_4^2\beta_1^2(1-\rho)^2}{(\rho+\beta_1\bar s_4(1-\rho)-\theta\beta_1\rho)^2)}\\
&-\frac{\lambda\rho\beta_1(1-\rho)}{(\rho+\beta_1\bar s_4(1-\rho))^2}.
\end{align*}
$\frac{\partial^2 E(\bar s_4)}{\partial s_4^2}$ can now easily be computed, we obtain
\begin{align*}
\frac{\partial^2 E(\bar s_4)}{\partial \bar s_4^2}=&-\frac{2\theta^2 N_0\rho^2\beta_1^2(1-\rho)(1-\beta_1\theta) }{(\rho+\beta_1\bar s_4(1-\rho)-\theta \beta_1\rho)^3}\\
&+\frac{2\lambda\rho\beta_1^2(1-\rho)^2}{(\rho+\beta_1\bar s_4(1-\rho))^3}\\
=& \beta_1^2\rho(\rho-1)\bigg(\frac{2\theta^2 N_0\rho(1-\beta_1\theta)}{(\rho+\beta_1\bar s_4(1-\rho)\quad-\theta \beta_1\rho)^3}\\
&\quad-\frac{2\lambda(1-\rho)}{(\rho+\beta_1\bar s_4(1-\rho))^3}\bigg).
\end{align*}
We want to show the latter to be $\geq 0$, and since $\rho<1$ it suffices to show
\begin{align*}
\frac{2\theta^2 N_0\rho(1-\beta_1\theta)}{(\rho+\beta_1\bar s_4(1-\rho)-\theta \beta_1\rho)^3}-\frac{2\lambda(1-\rho)}{(\rho+\beta_1\bar s_4(1-\rho))^3}\leq 0,
\end{align*}
which holds if and only if
\begin{align}\label{eq:N0conditionsatisfied}
N_0\leq f(\bar s_4),
\end{align}
with
\begin{align*}
f(\bar s_4)=\frac{\lambda(1-\rho)(\beta_1\bar s_4+\rho(1-\beta_1\bar s_4)-\rho\beta_1\theta)^3}{\theta^2\rho(1-\beta_1\theta)(\beta_1\bar s_4+\rho(1-\beta_1\bar s_4))^3}.
\end{align*}
We will now show that Inequality~\eqref{eq:N0conditionsatisfied} is implied by the condition on $N_0$ in the statement,i.e., Equation~(10). To do so we will show that $f(\bar s_4)$ is increasing in $\bar s_4\in [0,1]$, that is, $f'(\bar s_4)\geq 0$ for all $\bar s_4\in[0,1]$. We have
\begin{align*}
f'(\bar s_4) =& \frac{\lambda(1-\rho)}{\theta^2\rho(1-\beta_1\theta)}\frac{\partial }{\partial \bar s_4}\left(1-\frac{\theta\beta_1\rho}{\rho+\beta_1\bar s_4(1-\rho)}\right)^3\\
&\frac{3\lambda(1-\rho)}{\theta^2\rho(1-\beta_1\theta)}\left(1-\frac{\theta\beta_1\rho}{\rho+\beta_1\bar s_4(1-\rho)}\right)^2\\
&\cdot\frac{\partial}{\partial \bar s_4}\left(1-\frac{\theta\beta_1\rho}{\rho+\beta_1\bar s_4(1-\rho)}\right)\\
=&\frac{3\lambda(1-\rho)}{\theta(1-\beta_1\theta)}\left(1-\frac{\theta\beta_1\rho}{\rho+\beta_1\bar s_4(1-\rho)}\right)^2\frac{\beta_1(1-\rho)}{(\rho+\beta_1\bar s_4(1-\rho))^2}\\
\geq& 0.
\end{align*}
Inequality~\eqref{eq:N0conditionsatisfied} is therefore satisfied since the condition in the statement Equation~(10) 
ensures
\begin{align*}
N_0\leq \frac{\eta\mu(1-\rho)(1-\beta_1\theta)^2}{2\rho\theta^2}=f(0)\leq f(\bar s_4).
\end{align*}
Hence, 
\begin{align*}
\frac{\partial^2 E(\bar s_4)}{\partial \bar s_4^2}\geq0,
\end{align*}
and $E(\bar s_4)$ is convex in $\bar s_4$ for all $N_0\leq f(0)$.

\subsection{Proof of Proposition~\ref{prop:average_optimal}} \label{average_optimal_policy}

\begin{proof} We want to characterize the optimal equilibrium point. In Proposition~\ref{prop:convexity} we have proven that the equilibrium cost $E^*(\vec{\bar s})$ is convex in $\bar s_4$, we therefore distinguish between three possible cases, see Figure~\ref{fig_3cases}.
\begin{itemize}
\item Case 1: $\mathrm{d}E^*/\mathrm{d}\bar{s}_4\geq0$ for all $\bar s_4\in[0,1]$. In this case $s^*_4=0$.
\item Case 2: $\mathrm{d}E(\bar s_4)/\mathrm{d}\bar{s}_4=0$ for  $\bar{s}_4\in(0,1)$. In this case $s_4^*$ is such that $\mathrm{d}E(s_4^*)/\mathrm{d}s_4^*=0$.
\item Case 3: $\mathrm{d}E(\bar s_4)/\mathrm{d}\bar{s}_4$ for all $\bar s_4\in[0,1]$. In this case $s^*_4=~1$.
\end{itemize} 
We compute the first derivative of $E(\vec{\bar{s}})$ w.r.t. $\bar s_4$ which after some algebra reduces to
\begin{align}\label{eq:s_4equals0}
&\frac{\theta N_0(2\beta_1\rho\bar s_4(1-\rho)(1-\theta \beta_1)+\rho^2(1-\theta\beta_1)+\bar s_4^2\beta_1^2(1-\rho)^2)}{(\rho+\beta_1\bar s_4(1-\rho)-\theta\beta_1\rho)^2}\nonumber\\
&-\frac{\lambda\rho\beta_1(1-\rho)}{(\rho+\beta_1\bar s_4(1-\rho))^2}.
\end{align}
We know the latter is an increasing function, since its derivative w.r.t. $\bar s_4$ is $\geq 0$ (see Proposition~\ref{prop:convexity}). We then know that if $\mathrm{d}E(\vec{\bar s})/\mathrm{d}\bar s_4\geq 0$ for $\bar s_4=0$ then it is also for $\bar s_4\in(0,1]$. Similarly, if $\mathrm{d}E(\vec{\bar s})/\mathrm{d}\bar s_4\leq 0$ for $\bar s_4=1$ then also for $\bar s_4\in[0,1)$. Next we find the condition so that $\mathrm{d}E(\vec{\bar s})/\mathrm{d}\bar s_4\geq 0$ for $\bar s_4=0$. Substituting $\bar s_4=0$ in Equation~\eqref{eq:s_4equals0} we obtain 
\begin{align*}
\mathrm{d}E^*(\vec{\bar s})/\mathrm{d}\bar s_4\bigg|_{\bar s_4=0}\geq 0\Longleftrightarrow N_0\geq\frac{\lambda\beta_1(1-\rho)(1-\beta_1\theta)}{\rho\theta}.
\end{align*}
Equivalently, if we substitute $\bar s_4=1$ in Equation~\eqref{eq:s_4equals0} we obtain
\begin{align*}
&\mathrm{d}E^*(\vec{\bar s})/\mathrm{d}\bar s_4\bigg|_{\bar s_4=1}\leq 0\Longleftrightarrow \\
&N_0\leq\frac{\beta_1\lambda(1-\rho)\rho(\rho+\beta_1-\beta_1\rho(1+\theta))^2/(\beta_1+\rho-\beta_1\rho)^2}{\theta(2\beta_1(1-\rho)\rho(1-\theta\beta_1)+\rho^2(1-\theta\beta_1)+\beta_1^2(1-\rho)^2)}.
\end{align*}
\begin{figure}\centering
\includegraphics[scale=0.5]{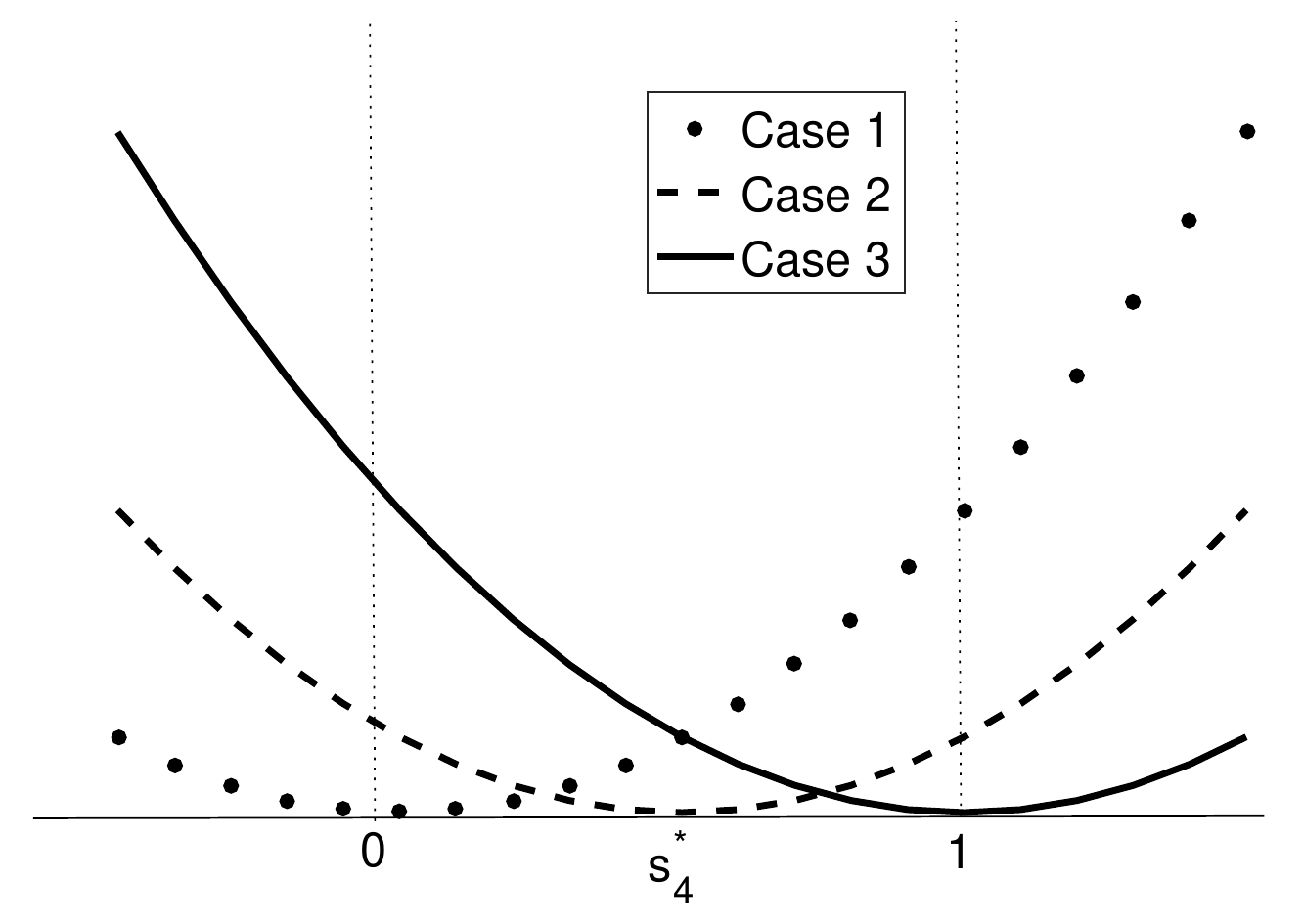}
\caption{Case~1: the cost at equilibrium is increasing in $\bar s_4$. Case~2: there exists $\bar s_4\in(0,1)$ such that $\mathrm{d}E(\bar{s}_4)/\mathrm{d}\bar{s}_4=0$. Case~3: the cost at equilibrium is decreasing in $\bar {s}_4$.}\label{fig_3cases}
\end{figure}
We have therefore proven that if $N_0\geq N_0^0$ then $s_4^*=0$, if $N_0\leq N_0^1$ then $s_4^*=1$ and if $N_0\in(N_0^1,N_0^0)$ then $s_4^*\in(0,1)$ and it is the solution obtained by equating Equation~\eqref{eq:s_4equals0} with 0, that is, 
\begin{align}\label{eq:Noexpression} 
&N_0=\frac{\lambda\rho\beta_1(1-\rho)(\rho+\beta_1\bar s_4(1-\rho)-\theta\beta_1\rho)^2/(\rho+\beta_1\bar s_4(1-\rho))^2}{\theta(2\beta_1\bar s_4(1-\rho)(1-\theta\beta_1)\rho+\rho^2(1-\theta\beta_1)+\bar s_4^2\beta_1^2(1-\rho)^2)}.
\end{align} 
The latter after substitution of $\bar s_4=\rho(\beta_1-\bar m_4)/(\beta_1(1-\rho)\bar m_4)$ yields
\begin{align}\label{eq:Nocontinues}
N_0&=\frac{\frac{\lambda\beta_1\rho(1-\rho)(\rho+\frac{\rho(\beta_1-\bar m_4)}{\bar m_4}-\theta\beta_1\rho)^2}{(\rho+\frac{\rho(\beta_1-\bar m_4)}{\bar m_4})^2}}{\theta(2\frac{\rho(\beta_1-\bar m_4)(1-\theta\beta_1)\rho}{\bar m_4}+\rho^2(1-\theta\beta_1)+\frac{\rho^2(\beta_1-\bar m_4)^2}{\bar m_4^2})},
\end{align} 
which after some algebra reduces to
\begin{align}\label{FinalN0}
N_0
&=\frac{\lambda\beta_1(1-\rho)\bar m_4^2(1-\theta \bar m_4)^2}{\theta\rho((1-\beta_1\theta)\bar m_4(2\beta_1-\bar m_4)+(\beta_1-\bar m_4)^2)}\nonumber\\
&=\frac{\lambda(1-\rho)\bar m_4^2(1-\theta \bar m_4)^2}{\theta\rho(\theta\bar m_4(\bar m_4-2\beta_1)+\beta_1)}
\end{align}


\subsection{Proof of Proposition~\ref{prop:bias_optimal}}\label{control_policy}
In order to prove that the control in the statement of Proposition~\ref{prop:bias_optimal} is optimal it suffices to show that the Hamilton-Jacobi-Bellman (HJB) equation is satisfied. The HJB equation is a partially differentiable equation that serves as sufficient condition for optimality for optimal control problems, see \cite{Bertsekas05}. The HJB equation in our particular problem reduces to the following condition,
\begin{align}\label{eq:HJBcondition}
\min\{\mathcal{V}_0(\vec m), \mathcal{V}_1(\vec m)\}=0,
\end{align}
for all $\vec m\in[0,1]^4$, where
\begin{align}
&\mathcal{V}_0(\vec m)=\lambda(m_2+m_4)-E^*+\frac{\partial V(\vec m)}{\partial \vec m}\varphi^0(\vec m),\\
&\mathcal{V}_1(\vec m)=\frac{\theta N_0}{1-\theta m_4}+\lambda(m_2+m_4)-E^*+\frac{\partial V(\vec m)}{\partial \vec m}\varphi^1(\vec m),
\end{align}
and $V(\cdot)$ the Bellman value function. In the latter equation, $\varphi^a(\vec m)$ for $a\in\{0,1\}$ represents the vector of the evolutions of the states $m_i$ $i=1,\ldots,4$, under action $a$. We will denote action $a=1$ the active action, and $a=0$ the passive action. Then 
\begin{align*}
\varphi^a(\vec m)=(\varphi^a_1(\vec m),\ldots,\varphi^a_4(\vec m))',
\end{align*}
with
\begin{align}\label{eq:evolution}
&\varphi^a_1(\vec m)=-(\beta_1+\beta_0\rho)m_1+\beta_0(1-\rho)m_3+a\beta_0(1-\rho)m_4,\nonumber\\
&\varphi^a_2(\vec m)=\beta_0\rho m_1-\beta_1m_2+\beta_0\rho m_3+ (\rho a+(1-a))\beta_0m_4,\nonumber\\
&\varphi^a_3(\vec m)=\beta_1(1-\rho)m_1-(\beta_1\rho+\beta_0)m_3+a\beta_1(1-\rho)m_4,\nonumber\\
&\varphi^a_4(\vec m)=\beta_1\rho m_1+\beta_1m_2+\beta_1\rho m_3-(\beta_1a(1-\rho)+\beta_0)m_4.
\end{align}
 Condition~\eqref{eq:HJBcondition} is written for the case $s(t)\in\{0,1\}$, as these are all the possible controls we are interested on.

 We first note that if an optimal solution satisfies the HJB equation then the following conditions must hold in all switching points:
\begin{align}
&\mathcal{V}_0(\vec m)=\mathcal{V}_1(\vec m), \label{eq:HJB_thresholdcondition1}\\
&\mathcal{V}_0(\vec m)=0,\label{eq:HJB_thresholdcondition2}\\
&\frac{\partial}{\partial m_1}\left(\frac{\partial V(\vec m)}{\partial m_4}\right)=\frac{\partial}{\partial m_4}\left(\frac{\partial V(\vec m)}{\partial m_1}\right).\label{eq:HJB_thresholdcondition3}
\end{align}
Condition~\eqref{eq:HJB_thresholdcondition1} must hold in all points $\vec m$ for which being active or passive is equally attractive, namely, in all switching points. Condition~\eqref{eq:HJB_thresholdcondition2} must be satisfied by all $\vec m$ at which passive action is optimal, in particular, for all switching points. Finally, Condition~\eqref{eq:HJB_thresholdcondition3}, symmetry of the second derivatives of the value function, must hold at all points. The partial derivative of $\partial V(\vec m)/\partial m_i$ must be continuous in the decision boundary.

Before proving that all three conditions~\eqref{eq:HJB_thresholdcondition1}-~\eqref{eq:HJB_thresholdcondition3} are satisfied we are going to show that $\frac{\partial V(\vec m)}{\partial m_1}=\frac{\partial V(\vec m)}{\partial m_3}$. By definition
\begin{align*}
V(\vec m)&=\int_0^{\infty}\bigg(\frac{\theta N_0 s_4^\pi(t)}{1-\theta m_4^\pi(t)s_4^\pi(t)}+\lambda(1-m_1^\pi(t)-m_3^\pi(t)) \\
&-E^*\bigg)\mathrm{d}t,
\end{align*}
where $\pi$ is considered to be an optimal policy. Therefore we have
\begin{align*}
\frac{\partial V(\vec m)}{\partial m_i}=&\int_0^{\infty}\frac{\partial}{\partial m_i}\bigg(\frac{\theta N_0 s_4^\pi(t)}{1-\theta m_4^\pi(t)s_4^\pi(t)}\\
&+\lambda(1-m_1^\pi(t)-m_3^\pi(t))\bigg)\mathrm{d}t,
\end{align*}
for all $i$. We are going to show that
\begin{align}\label{eq:conditionsderivatives}
&\frac{\partial(m_1^\pi(t)+m_3^\pi(t))}{\partial m_1}=\frac{\partial (m_1^\pi(t)+m_3^\pi(t))}{\partial m_3}, \hbox{and}\nonumber\\
&\frac{\partial }{\partial m_1}\left(\frac{1}{1-\theta m_4^\pi(t)}\right)=\frac{\partial }{\partial m_3}\left(\frac{1}{1-\theta m_4^\pi(t)}\right).
\end{align}
The policy $\pi$ is a combination of passive and active  intervals, therefore we will compute $m_i^{\pi,a}(t)$ in a passive time interval (when $a=0$) and in an active time interval (when $a=1$) for all $i=1,3,4$. We will later prove that Equations~\eqref{eq:conditionsderivatives} are satisfied. Note that 
\begin{align}\label{eq:ode}
\frac{\mathrm{d}m_i^{\pi,a}(t)}{\mathrm{d}t}=\varphi_i^a(\vec m^{\pi,a}(t)), \hbox{ for all } i=1,3,4.
\end{align}
We do not consider $m_2^{\pi,a}(t)$, since $m_2^{\pi,a}(t)=1-m_1^{\pi,a}(t)-m_3^{\pi,a}-m_4^{\pi,a}(t)$. If we solve the ordinary differential equation system~\eqref{eq:ode} we obtain
\begin{align*}
&m_1^{\pi,0}(t)=(\beta_1m_1-\beta_0m_3)\mathrm{e}^{-t}+\beta_0(m_1+m_3)\mathrm{e}^{-\rho t},\\
&m_3^{\pi,0}(t)=(\beta_0m_3-\beta_1m_1)\mathrm{e}^{-t}+\beta_1(m_1+m_3)\mathrm{e}^{-\rho t},\\
&m_4^{\pi,0}(t)=\beta_1(1+\mathrm{e}^{-t}(m_1+m_3-1))+\mathrm{e}^{-t}m_4\\
&\qquad\quad\quad-(m_1+m_3)\beta_1\mathrm{e}^{-\rho t},
\end{align*}
for all initial points $\vec m$, and
\begin{align*}
m_1^{\pi,1}(t)=&\frac{\mathrm{e}^{-(1+\beta_0(\rho-1))t}}{1+\beta_0(\rho-1)}\bigg(-1+m_1+m_3+m_4+\beta_0(1-m_3\\
&-m_4+m_1(\rho-1)+m_3\rho+m_4\rho)\\
&+(-1+m_3+m_4+\beta_0^2(-1+\mathrm{e}^{t})(\rho-1)\\
&-2\beta_0+\beta_0 (m_3+m_4+\rho-m_3\rho-m_4\rho))\mathrm{e}^{\beta_0(\rho-1)t}\\
&-\beta_0(\rho-1)\mathrm{e}^{t+\beta_0(\rho-1)t}\bigg),
\end{align*}
\begin{align*}
&m_3^{\pi,1}(t)\\
&=\frac{\mathrm{e}^{-(1+\beta_0(\rho-1))t}}{1+\beta_0(\rho-1)}\bigg(-1+m_1+m_3+m_4\\
&\quad+\beta_0^2(1+m_1+m_3+m_4-m_1\rho-m_3\rho-m_4\rho)\\
&\quad-\beta_0(-2+2m_3+2m_4-m_4(\rho-2)-m_3\rho-m_4\rho)\\
&\quad+(-(1+m_1+m_3+m_4)-\beta_0^3(-1+\mathrm{e}^{t})(\rho-1)\\
&\quad+(3-m_4+m_1(\rho-1)+2m_1(\rho-1)+2m_3(\rho-1)\\
&\quad-2\rho+m_4\rho)\beta_0^2\\
&\quad+\beta_0(3+m_4(\rho-2)+m_3(\rho-3)+m_1(\rho-2)-\rho))\mathrm{e}^{\beta_0(\rho-1)t}\\
&\quad+(2(\rho-1)\beta_0^2-\beta_0(\rho-1))\mathrm{e}^{t+\beta_0(\rho-1)t}\bigg),
\end{align*}
\begin{align*}
&m_4^{\pi,1}(t)\\
&=\frac{\mathrm{e}^{-t-\beta_0(\rho-1)t}}{\beta_0(1+\beta_0(\rho-1))}\bigg(
-1+m_1+m_3+m_4\\
&\quad+\beta_0^2(1-m_1+m_3+m_4)-\rho(m_1+m_3+m_4)\\
&\quad+2\beta_0(-1+m_1+m3+m4-\rho(m_1+m_3+m_4))\\
&\quad+((-1+m_1+m_3+m_4)-\beta_0^2(-1+m_1+m_3)(\rho-1)\\
&\quad+\beta_0(2-m_1(\rho-2)+m_3(\rho-2)-\rho-m_4+m_4\rho))\mathrm{e}^{\beta_0(\rho-1)t}\\
&\quad+(\beta_0^2\rho+\beta_0\rho)\mathrm{e}^{t+\beta_0(\rho-1)t}\bigg).
\end{align*}
It is now easy to show that $\frac{\partial (m_1^{\pi,a}(t)+m_3^{\pi,a}(t))}{\partial m_1}=\frac{\partial (m_1^{\pi,a}(t)+m_3^{\pi,a}(t))}{\partial m_3}$, since
\begin{align*}
&\frac{\partial (m_1^{\pi,1}(t)+m_3^{\pi,1}(t))}{\partial m_i}=\frac{\mathrm{e}^{-t} (-1 + \beta_0 + \mathrm{e}^{-\beta_0 (-1 + \rho) t)}}{\beta_0},
\end{align*}
  for all $i=1,3$ and 
	\begin{align*}
&\frac{\partial (m_1^{\pi,0}(t)+m_3^{\pi,0}(t))}{\partial m_i}=\mathrm{e}^{\rho t}, \hbox{ for all } i=1,3.
\end{align*}
 Besides, $\frac{\partial}{\partial m_1}\left(\frac{1}{1-\theta m_4^{\pi,a}(t)}\right)=\frac{\partial}{\partial m_3}\left(\frac{1}{1-\theta m_4^{\pi,a}(t)}\right)$, for $a=0,1$, since $\frac{\partial m_4^{\pi,a}(t)}{\partial m_1}=\frac{\partial m_4^{\pi,a}(t)}{\partial m_3}$ for $a=0,1$. Hence, $\frac{\partial V(\vec m)}{\partial m_1}=\frac{\partial V(\vec m)}{\partial m_3}$.

Let us now show under which conditions Equations~\eqref{eq:HJB_thresholdcondition1}-~\eqref{eq:HJB_thresholdcondition3} are satisfied. We start from Equation~\eqref{eq:HJB_thresholdcondition1}, namely,
\begin{align*}
&\mathcal{V}_0(\vec m)\\
=&\lambda(1-m_1-m_3)-E^*+\frac{\partial V(\vec m)}{\partial m_1}(\varphi_1^0(\vec m)+\varphi_3^0(\vec m))\\
&+\frac{\partial V(\vec m)}{\partial m_4}\varphi_4^0(\vec m)=\frac{\theta N_0}{1-\theta m_4}+\lambda(1-m_1-m_3)-E^*\\
&+\frac{\partial V(\vec m)}{\partial m_1}(\varphi_1^1(\vec m)+\varphi_3^1(\vec m))+\frac{\partial V(\vec m)}{\partial m_4}\varphi_4^1(\vec m)\\
=&\mathcal{V}_1(\vec m).
\end{align*}
From the latter we obtain the condition
\begin{align*}
&\frac{\partial V(\vec m)}{\partial m_1}(\varphi_1^0(\vec m)+\varphi_3^0(\vec m)-\varphi_1^1(\vec m)-\varphi_3^1(\vec m))\\
&=\frac{\theta N_0}{1-\theta m_4}+\frac{\partial V(\vec m)}{\partial m_4}(\varphi^1_4(\vec m)-\varphi^0_4(\vec m)),
\end{align*}
which after substitution of the values of $\varphi_i^a(\vec m)$, given by Equation~\eqref{eq:evolution}, gives
\begin{align}\label{eq:partial_m4}
\frac{\partial V(\vec m)}{\partial m_4}=\frac{\partial V(\vec m)}{\partial m_1}\frac{1}{\beta_1}+\frac{\theta N_0}{(1-\theta m_4)(1-\rho)\beta_1m_4}.
\end{align} 

We solve for Equation~\eqref{eq:HJB_thresholdcondition2} next. Namely,
\begin{align*}
\mathcal{V}_0(\vec m)=0\Leftrightarrow &\lambda(1-m_1-m_3)-E^*\\
&+\frac{\partial V(\vec m)}{\partial m_1}(\varphi_1^0(\vec m)+\varphi_3^0(\vec m))+\frac{\partial V(\vec m)}{\partial m_4}\varphi_4^0(\vec m)=0,
\end{align*} 
which after substitution of Equation~\eqref{eq:partial_m4} and $\varphi_i^0(\vec m)$ for all $i=1,3,4$, given by Equation~\eqref{eq:evolution}, we obtain
\begin{align*}
\frac{\partial V(\vec m)}{\partial m_1}=&\frac{(E^*-\lambda(1-m_1-m_3))\beta_1}{(\beta_1-\beta_1(m_1+m_3)-m_4)}\\
&-\frac{\theta N_0(\beta_1(\rho-1)(m_1+m_3)+\beta_1-m_4)}{(1-\theta m_4)(1-\rho)(\beta_1-\beta_1(m_1+m_3)-m_4)m_4}.
\end{align*}
Therefore, substituting the value of $\frac{\partial V(\vec m)}{\partial m_1}$ obtained above in Equation~\eqref{eq:partial_m4} we get
\begin{align*}
\frac{\partial V(\vec m)}{\partial m_4}=&\frac{\theta N_0}{(1-\theta m_4)(1-\rho)\beta_1m_4}\\
&+\frac{E^*-\lambda(1-m_1-m_3)}{\beta_1-\beta_1(m_1+m_3)-m_4}\\
&-\frac{\theta N_0(\beta_1(\rho-1)(m_1+m_3)+\beta_1-m_4)}{\beta_1(1-\theta m_4)(1-\rho)(\beta_1-\beta_1(m_1+m_3)-m_4)m_4}\\
=&\frac{E^*-\lambda(1-m_1-m_3)}{\beta_1-\beta_1(m_1+m_3)-m_4}\\
&-\frac{\theta N_0\beta_1\rho(m_1+m_3)}{\beta_1m_4(1-\theta m_4)(1-\rho)(\beta_1-\beta_1(m_1+m_3)-m_4)}.
\end{align*}
We are now left with Equation~\eqref{eq:HJB_thresholdcondition3}, that is,
\begin{align*}
\frac{\partial}{\partial m_4}\left(\frac{\partial V(\vec m)}{\partial m_1}\right)=\frac{\partial}{\partial m_1}\left(\frac{\partial V(\vec m)}{\partial m_4}\right).
\end{align*}
Let us first compute $\frac{\partial}{\partial m_4}\left(\frac{\partial V(\vec m)}{\partial m_1}\right)$, namely,
\begin{align*}
&\frac{\partial}{\partial m_4}\left(\frac{\partial V(\vec m)}{\partial m_1}\right)=\frac{\beta_1(E^*+\lambda(-1+m_1+m_3))}{(\beta_1(-1+m_1+m_3)+m_4)^2}\\
&-N_0 \theta\frac{m_4^2 - 2 m_4^3 \theta }{(m_4^2 (\beta_1 (-1 + m_1 + m_3) + 
       m_4)^2 (-1 + \rho) (-1 + m_4 \theta)^2)}\\
&-N_0 \theta\frac{( \beta_1^2 (-1 + m_1 + m_3) (1 + (m_1+m_3) (-1 + \rho)) (-1 + 
          2 m_4 \theta)} {(m_4^2 (\beta_1 (-1 + m_1 + m_3) + 
       m_4)^2 (-1 + \rho) (-1 + m_4 \theta)^2)}\\
&+ N_0 \theta\frac{\beta_1 m_4 (-2 + 4 m_4 \theta + 
         (m_1+ m_3) (2 - 2 \rho + m_4 \theta(-4+3\rho) ))}{(m_4^2 (\beta_1 (-1 + m_1 + m_3) + 
       m_4)^2 (-1 + \rho) (-1 + m_4 \theta)^2)}
\end{align*}
We now compute  $\frac{\partial}{\partial m_1}\left(\frac{\partial V(\vec m)}{\partial m_4}\right)$, that is,
\begin{align*}
&\frac{\partial}{\partial m_1}\left(\frac{\partial V(\vec m)}{\partial m_4}\right)\\
&=\frac{E^*\beta_1-\lambda m_4}{(\beta_1-\beta_1(m_1+m_3)-m_4)^2}\\
&\quad-\frac{\theta N_0\rho(\beta_1-m_4)}{(1-\theta m_4)(1-\rho)m_4(\beta_1-\beta_1(m_1+m_3)-m_4)^2}.
\end{align*}
By equating $\frac{\partial}{\partial m_4}\left(\frac{\partial V(\vec m)}{\partial m_1}\right)$ and $\frac{\partial}{\partial m_1}\left(\frac{\partial V(\vec m)}{\partial m_4}\right)$ we obtain
\begin{align}\label{eq:N0_HJBexpression}
N_0=\frac{\lambda m_4^2(\rho-1)(\theta m_4-1)^2}{\theta(\beta_1(1+(m_1+m_3)(\rho-1))(2\theta m_4-1)+m_4(1-2\theta m_4+\rho(\theta m_4-1)))}.
\end{align}
Note that in equilibrium $\bar\varphi_4(\vec \bar m) = \beta_1(\rho-1)(\bar m_1+\bar m_3)+\beta_1-\bar m_4-\beta_1\bar s_4(1-\rho)\bar m_4=0$ and $\bar s_4 = \frac{\rho(\beta_1-\bar m_4)}{\beta_1(1-\rho)\bar m_4}$, therefore
\begin{align*}
\beta_1(\rho-1)(\bar m_1+\bar m_3)+\beta_1 =\bar m_4 + \rho(\beta_1-\bar m_4).
\end{align*}
Substituting the latter in the denominator of Equation~\eqref{eq:N0_HJBexpression} we obtain
\begin{align*}
N_0&=\frac{\lambda m_4^2(\rho-1)(\theta m_4-1)^2}{\theta((\bar m_4+\rho(\beta_1-\bar m_4))(2\theta m_4-1)+m_4(1-2\theta m_4+\rho(\theta m_4-1)))}\nonumber\\
N_0&=\frac{\lambda m_4^2(\rho-1)(\theta m_4-1)^2}{\theta\rho(\theta \bar m_4(2\beta_1-\bar m_4)-\beta_1)}\nonumber\\
N_0&=\frac{\lambda m_4^2(1-\rho)(\theta m_4-1)^2}{\theta\rho(\theta \bar m_4(\bar m_4-2\beta_1)+\beta_1)}.\\
\end{align*}
The latter coincides with the $N_0$ as given by Eq.~\eqref{FinalN0}.

\end{proof}
\end{document}